\DeclareSymbolFont{matha}{OML}{txmi}{m}{it}
\DeclareMathSymbol{\varv}{\mathord}{matha}{118}
\def\delequal{\mathrel{\ensurestackMath{\stackon[1pt]{=}{\scriptstyle\Delta}}}}
\newtheorem{theorem}{Theorem}
\newtheorem{proposition}[theorem]{Proposition}
\newtheorem{remark}{Remark}
\newcommand{\yellow}{\textcolor{black}}
\newcommand{\blue}{\textcolor{black}}
\newcolumntype{C}{>{\centering\arraybackslash}X}
\begin{document}
\title{Low Complexity Turbo SIC-MMSE Detection for Orthogonal Time Frequency Space Modulation}

\author{Qi Li,~\IEEEmembership{Student Member,~IEEE,}
        Jinhong Yuan,~\IEEEmembership{Fellow,~IEEE,}
        Min Qiu,~\IEEEmembership{Member,~IEEE,}
        Shuangyang Li,~\IEEEmembership{Member,~IEEE,}
        and Yixuan Xie,~\IEEEmembership{Member,~IEEE}

\thanks{
The work of Qi Li, Jinhong Yuan, Min Qiu, and Yixuan Xie was supported in part by the Australian Research Council (ARC) Discovery Project under Grant DP220103596, and in part by the ARC Linkage Project under Grant LP200301482. The work of Shuangyang Li is supported in part by the European Union’s Horizon 2020 Research and Innovation Program under MSCA Grant No. 101105732 – DDComRad.\par
This work has been presented in part at the 2022 IEEE International Conference on Communications Workshops \cite{Li2022IterativeModulation}.\par
Qi Li, Jinhong Yuan, Min Qiu, and Yixuan Xie are with the School of Electrical Engineering and Telecommunications, The University of New South Wales, Sydney, Australia (e-mail: oliver.li1@unsw.edu.au, j.yuan@unsw.edu.au, min.qiu@unsw.edu.au, yixuan.xie@unsw.edu.au).\par
Shuangyang Li is with Technical University of Berlin, Berlin, Germany (e-mail: shuangyang.li@tu-berlin.de).

}%
}

\maketitle

\begin{abstract}
Recently, orthogonal time frequency space (OTFS) modulation has garnered considerable attention due to its robustness against doubly-selective wireless channels. In this paper, we propose a low-complexity iterative successive interference cancellation based minimum mean squared error (SIC-MMSE) detection algorithm for zero-padded OTFS (ZP-OTFS) modulation. In the proposed algorithm, signals are detected based on layers processed by multiple SIC-MMSE linear filters for each sub-channel, with interference on the targeted signal layer being successively canceled either by hard or soft information. To reduce the complexity of computing individual layer filter coefficients, we also propose a novel filter coefficients recycling approach in place of generating the exact form of MMSE filter weights. Moreover, we design a joint detection and decoding algorithm for ZP-OTFS to enhance error performance. Compared to the conventional SIC-MMSE detection, our proposed algorithms outperform other linear detectors, e.g., maximal ratio combining (MRC), for ZP-OTFS with up to 3 dB gain while maintaining comparable computation complexity.

\end{abstract}

\begin{IEEEkeywords}
OTFS, detection, SIC-MMSE, SISO, turbo equalization
\end{IEEEkeywords}

\IEEEpeerreviewmaketitle

\section{Introduction}
\IEEEPARstart{W}{ith} the deployment of 5G wireless network expanding globally, wireless communication is increasingly being used in high-mobility scenarios, such as high-speed railways, autonomous vehicles, unmanned aerial vehicles (UAVs), etc. The time-invariant channel assumption, which holds in conventional digital modulation, such as orthogonal frequency division multiplexing (OFDM), is no longer valid in these high-mobility scenarios due to high Doppler frequency spreads. The resulting time-variant channel leads to doubly-selectivity of the system, causing severe sub-carrier non-orthogonality in conventional OFDM modulation. To solve this problem, orthogonal time frequency space (OTFS) \cite{Hadani2017OrthogonalModulation} has been proposed as a promising technique that is more robust against the Doppler effect experienced in doubly-selective wireless channels. In OTFS, information symbols are modulated in the delay-Doppler (DD) domain, which distinguishes itself from the widely recognized time-frequency (TF) domain modulation.  Within a stationary region in the DD domain, the doubly-selective channel response can be modeled as \blue{quasi time-invariant}. OTFS takes advantage of the stable DD domain wireless channel by modulating symbols in this domain. In principle, the DD domain modulated symbols spread across the entire TF domain, \blue{which can be exploited to achieve} full time-frequency diversity crucial for signal processing on the receiver \cite{LR3,Wei2021OrthogonalWaveform}. However, OTFS requires tailored detection algorithms, as \blue{it is difficult to directly} adopt a simple single-tap equalizer on the receiver side due to strong inter-symbol interference (ISI) experienced in the effective DD domain channel. Thus, low-complexity yet powerful detectors are necessary to achieve the desired error performance of the modulation.\par
There have been numerous detection algorithms proposed for OTFS modulation. For example, the conventional non-linear detector, maximum a posteriori (MAP), achieves optimal performance, but its complexity is impractically high \cite{LR1,LR2} that grows exponentially with the number of paths. To address this, several message passing (MP)-based detection variants have been proposed. In \cite{LR3}, the authors presented a message passing algorithm (MPA) that \blue{performs in \yellow{DD} domain which can be presented by a sparse factor graph and applies the Gaussian approximation to the interference} to reduce computation complexity. However, it was later discovered in \cite{LR4} that MPA may converge to a locally optimal point in the loopy factor graph. To overcome this issue, the authors proposed a variational Bayes (VB) approach that guarantees convergence and yields significant performance improvement. Although these MPA detector variants significantly reduce complexity from MAP, they sacrifice near-optimal performance.\par
Analyses of linear detectors, such as zero forcing (ZF) and minimum mean squared error (MMSE) for OTFS modulation have been conducted in \cite{LR24}. In \cite{LR25}, the complexity of linear MMSE (LMMSE) matrix inversion was reduced by utilizing LU matrix decomposition. Successive interference cancellation (SIC) and parallel interference cancellation (PIC) for LMMSE filtering were applied by the authors in \cite{LR27} and \cite{LR28}, respectively. In \cite{LR30}, a maximal ratio combining (MRC) \blue{based RAKE receiver} was presented for a zero-padding OTFS (ZP-OTFS) system. This work revealed \blue{a simple vector form of ZP-OTFS input-output relation in the time domain for which lower-complexity detection becomes possible.} Cross-domain detection by means of MMSE filtering was originally investigated by authors in \cite{Li2021CrossModulation}. Inspired by these works, we aim to strengthen the error performance from MRC by proposing a cross-domain signal detection with a comparable complexity for ZP-OTFS. An improved version of the general MMSE, SIC-MMSE, has been well investigated for multiple-input-multiple-output (MIMO) OFDM systems, such as the vertical Bell Labs layered space-time (V-BLAST) \cite{LR33}, \cite{LR34}, \cite{LR35}, as discussed in works \cite{Lee2005NewSystems, Wang2007SoftStudy}. Based on these findings, we propose a high-performance and low-complexity detection algorithm and integrate it with cross-domain signal processing for ZP-OTFS. In our system model, the time domain channel \blue{enables}  low-complexity SIC-MMSE detection. We discover that the MMSE filter weights calculated from the previous signal can be reused for the subsequent signal detection, reducing the computation complexity significantly. We demonstrate that this recycling approach does not degrade the error performance when based on a preset mean squared error threshold. \par
It is \blue{well-known} that turbo codes \blue{can} approach the Shannon capacity \cite{LR36}. \blue{Because of this, the turbo principle has been extensively applied to} the inter-symbol-interference (ISI) channel with soft-input-soft-output joint equalization and decoding \cite{LR37}, \cite{LR38}, \cite{LR39}, \cite{LR40}. These pioneering works provide the foundation for our tailored turbo detection and decoding algorithm, specifically designed for ZP-OTFS.\par {{In this paper, we introduce a novel SIC-MMSE detection algorithm and a turbo receiver for ZP-OTFS system. We introduce two novel algorithms: the hard/soft SIC-MMSE and the approximate SIC-MMSE. Our research demonstrates that these innovative algorithms deliver substantial performance enhancements when compared to other linear detection methods.} \blue{Nevertheless, the main contributions of this work can be} summarized as follows.

\begin{itemize}
\item \blue{By exploiting the simple vector form of ZP-OTFS input-output relation \cite{LR30}, we propose a low-complexity iterative MMSE cross-domain signal detection algorithm with SIC, where detection and interference cancellation are performed on a symbol-by-symbol basis.} {Unlike traditional SIC-MMSE \cite{Chatterjee2021NonorthogonalTransmission} and MMSE \cite{Li2021CrossModulation} techniques proposed for OTFS, our SIC-MMSE relies on cross-domain iterative processing, addressing each signal layer's individual sub-channel. This approach results in a notable complexity reduction to $\mathcal{O}((M-l_{max})Nl_{max}^3)$ per iteration (where $M$ and $N$ are the numbers of subcarriers and time slots, respectively.
$l_{max}$ is the largest delay spread), while the classical SIC-MMSE operates on the entire channel, incurring a high complexity of $\mathcal{O}((M-l_{max})^3N^3)$ per iteration.} \blue{Specifically, we propose soft and hard SIC-MMSE detection algorithms. We show that both algorithms achieve a similar bit error rate (BER) for low-order modulations, e.g., 4-ary quadrature amplitude modulation (QAM) while the soft SIC-MMSE detection outperforms its hard counterpart for high-order modulations.}

\item \blue{To further reduce the computational complexity, we propose an approximate SIC-MMSE method, where we recycle MMSE filter weights for a fixed number of to-be-detected signals and reuse them for the detection of the subsequent signals. {The approximate SIC-MMSE can reduce the previous complexity $\mathcal{O}((M-l_{max})Nl_{max}^3)$ further by a factor of hundreds of times, bringing it down to $\mathcal{O}((M-l_{max})Nl_{max}^3\times \frac{1}{\Delta m})$, where $\Delta m$ is the number of approximated subchannels. A more detailed complexity analysis will be provided in Sec. IV.} We also derive an error tolerance parameter for the MMSE filter output $\Delta\beta$ and establish its relation to the number of approximated sub-channels allowed $\Delta m$. With a proper choice of $\Delta\beta$, the proposed approximate SIC-MMSE detector can achieve BER very close to the original SIC-MMSE detector. In addition, we show that for a given $\Delta\beta$, $\Delta m$ is inversely proportional to the Doppler shift experienced in the channel.}

\item \blue{For the proposed SIC-MMSE detectors, we analyze the signal-to-interference-plus-noise ratio (SINR) for the detection outputs and discuss its computation complexity.} We also precisely characterize the error recursion across iterations and derive upper and lower bounds for the cases where the residual interference remains full and completely canceled, respectively. \blue{We demonstrate} that the proposed approximate SIC-MMSE detector can approach the \blue{MSE} lower bound.

\item \blue{Finally, we introduce a joint detection and decoding turbo receiver for a coded ZP-OTFS system. Our numerical results validate the superior performance of the proposed algorithm over the benchmark OTFS detectors in the literature.}
\end{itemize}

The paper is organized as follows: In Section II, we provide a brief overview of modulation and demodulation for ZP-OTFS, as well as the input-output relation and effective channel matrix in the time domain. In Section III, we introduce the iterative cross-domain SIC-MMSE signal detection with both hard interference cancellation and soft interference cancellation. Section IV presents an approximate SIC-MMSE algorithm that achieves significant complexity reduction by reusing MMSE filter coefficients while maintaining the optimal SIC-MMSE performance from Section III. In Section V, we construct a soft-input-soft-output (SISO) turbo receiver for coded ZP-OTFS systems, where significant iterative gains are observed compared to both the uncoded system and that in \cite{LR30}. We also analyze the performance of our proposed detector and derive the state evolution to accurately characterize the error recursions. In Section VI, we provide extensive simulation results with various system parameters and compare them with other existing detectors for ZP-OTFS. Finally, Section VII concludes our findings and suggests future research directions.

\subsection*{Notations:}
We will use the following notations throughout this paper: $x$, $\bf x$, and $\bf X$ represent scalar, vector, and matrix. \blue{The operations $(.)^T$, $(.)^{\dag}$, and $(.)^H$ represent transpose, conjugate transpose, and Hermitian transpose, respectively.} $\bf 0$, ${\bf I}_{N}$, ${\bf F}_{N}$ and ${\bf F}_{N}^{\dag}$ are zero matrices, identity matrix with order $N$, $N$-point normalized discrete Fourier transform (DFT) matrix and $N$-point normalized inverse discrete Fourier transform (IDFT) matrix, respectively. Let $\circledast$, $\otimes$, and $\circ$ denote circular convolution, Kronecker product, and Hadamard product. $\text{vec}(\bf X)$ and $\text{vec}_{N,M}^{-1}(\bf x)$ represent column-wise vectorization of matrix $\bf X$ and matrix formed by folding a vector $\bf x$ into a $N\times M$ matrix, respectively. The set of $N\times M$ dimensional matrices with complex entries are denoted by $\mathbb {C}^{N \times M}$. \blue{For a channel matrix \yellow{${\bf H}\in\mathbb {C}^{MN \times MN}$}}, ${\bf H}_{n,m}$ denotes the $m$-th sub-channel matrix at the $n$-th signal block \blue{of ${\bf H}$} and ${\bf H}_{n,m}[:,l]$ denotes the $l$-th column vector in matrix ${\bf H}_{n,m}$.


\section{OTFS System Models}
Throughout this paper, we adopt the ZP-OTFS modulation system as presented in \cite{LR30}. The ZP utilized in OTFS is comparable to the CP or ZP added in OFDM, which simplifies the equalization process at the receiver \cite{LR30}. Additionally, the ZP incorporated into the delay-Doppler grid can serve as the guard band for pilot-based channel estimation \cite{Raviteja2019EmbeddedChannels}. In the following section, we provide a detailed explanation of the modulation and demodulation process at the transmitter and receiver, as well as the input-output relation in the ZP-OTFS system.
\subsection{Transmitter}
Let $x[k,j]$, $k\in [1,M]$, $j\in [1,N]$, denote the $(k,j)$-th modulated  $\mathcal{Q}$-QAM data symbols that are arranged on the two-dimensional DD domain grid ${\bf X}_{\rm DD} \in \mathbb {C}^{M \times N}$, where $N$ and $M$ are the number of time slots/Doppler bins and sub-carriers/delay bins per OTFS frame, respectively. The zero paddings are added at the last $l_{max}$ rows of ${\bf X}_{\rm DD}$, where $l_{max}$ is the largest delay index of the channel response. Moving forward, the symbols of a constellation set $\mathcal{Q}$ in the DD domain will be first transformed into the TF domain by using the inverse symplectic fast Fourier transform (ISFFT) given as \cite{Hadani2017OrthogonalModulation}
\begin{align} X[n,m]=\frac{1}{\sqrt {NM}} \sum _{k=0}^{N-1}{\sum _{j=0}^{M-1}{x[k,j] e^{j2\pi \left({\frac {nk}{N}-\frac {mj}{M}}\right)}}} ,\label{eq1} \end{align} where $X[n,m]$ represents the $(n,m)$-th symbol in the two-dimensional TF domain ${\bf X}_{\rm TF}$. (\ref{eq1}) can also be written using a matrix format with normalized DFT matrix as \begin{align} {\bf X}_{\rm TF}={{\bf F}} _M\cdot {{\bf X}_{\rm DD}}\cdot {{\bf F}} _N^{\dag }.\label{eq2} \end{align}
Then, the TF domain symbols are further transformed into time domain through a multicarrier (MC) modulator \cite{Hadani2017OrthogonalModulation}
\begin{align}
{s[q]= \sum _{n=0}^{N-1}{\sum _{m=0}^{M-1}}{X[n,m]g_{\rm tx}\bigg(\frac {qT}{M}-nT \bigg) e^{j2\pi m \Delta f(\frac {qT}{M}-nT)}}},
\end{align} where $g_{\rm tx}(t)$ is the transmit shaping pulse with sampling rate $\frac {T}{M}$, $\Delta f$ and $T$ are the subcarrier spacing and symbol period of the MC modulation, respectively. The time domain discrete signals in the matrix format can then be expressed as \begin{align} {{\bf s}} = \text{vec}({{\bf G}} _{\rm tx}\cdot ({{\bf F}} _M^{\dag }\cdot {\bf X}_{\rm TF})) , \end{align} where ${{\bf G}} _{\rm tx}$ is the matrix representation of the pulse shaping filter $g_{\rm tx}(t)$ at the transmitter. In this paper, only rectangular pulse\footnote{\blue{For OTFS, rectangular pulse shapes can introduce high out-of-band emission (OOBE) and cause ISI when a practical band pass filter is applied on receiver \cite{Shen2022ErrorReceivers}. To address this issue, \cite{Lin2022OrthogonalModulation} introduced orthogonal delay-Doppler division multiplexing modulation, which uses a Nyquist pulse train to achieve low OOBE as well as local orthogonality on the DD plane.}} will be considered on both transmitter and receiver sides as ideal pulse shape cannot be realized in practice \cite{Raviteja2019PracticalOTFS}. This leads to an identity pulse shaping matrix ${{\bf G}}_{\rm tx}$, i.e., ${\bf G}_{\rm tx}={\bf I}_M$. The time domain discrete signals are generated by an $M$-point FFT and an $N$-point IFFT of ${\bf X}_{\rm DD}$ from ISFFT, followed by an $M$-point IFFT from Heisenberg transform. Therefore, the time domain signals can be equivalently generated by an $N$-point IFFT for each row of ${\bf X}_{\rm DD}$, which is also known as the inverse discrete Zak transform (IDZT). Therefore, the time domain discrete OTFS signals at the transmitter side can be written in a vectorized form \cite{LR30}, i.e.,
\begin{align}
&{\bf s}=\text{vec}({\bf X}_{\rm DD}\cdot {\bf F}_N^{\dag }).  \label{eq5}
\end{align}

\subsection{Channel}
For a high-mobility wireless channel, the channel can be represented by a linear time-variant (LTV) system, a.k.a. the doubly-selective channel \cite{Bello1963CharacterizationChannels}. A physically meaningful representation of the LTV wireless channel is based on the time delays and Doppler frequency shifts \cite{Bello1963CharacterizationChannels}. Let $P$ denote the number of \blue{resolvable} paths in the channel, and the DD domain channel can be represented as \cite{Bello1963CharacterizationChannels} \begin{align} h(\tau, \nu) = \sum _{i=1}^{P} h_i \delta (\tau -\tau _i) \delta (\nu -\nu _i), \label{eq26}\end{align}
where $\tau _i$, $\nu _i$ and $h_i$ are the delay, Doppler shift, and attenuation factor for the $i$-th path, respectively \cite{Bello1963CharacterizationChannels}. Let $\tau_{i} =\ell_i \frac{T}{M}$, where $\ell_i$ is the delay tap for the $i$-th path and $\frac{T}{M}$ is the sampling interval or delay resolution, and $\boldsymbol{\kappa}_{\ell_i}$ represent all the Doppler corresponding to the delay $\ell_i$. Note that $\ell_i$ may not be an integer. The continuous impulse response of the channel in the delay-time domain is related to (\ref{eq26}) by
\begin{align} h(\tau,t)&= \int_\nu h(\tau, \nu){ e}^{j2\pi \nu (t-\tau)}d\nu =\sum^P_{i=1}\sum _{\kappa\in\boldsymbol{\kappa}_{\ell_i}} h_i{e}^{j2\pi \kappa \frac{\Delta f}{N} (t-\ell_i \frac{T}{M})}. \label{eq261}
\end{align}
Now let us denote $h(\ell_i,t)=\sum _{\kappa\in\boldsymbol{\kappa}_{\ell_i}} h_i{e}^{j2\pi \kappa \frac{\Delta f}{N} (t-\ell_i \frac{T}{M})}$. Then we have $h(\tau,t) = \sum^P_{i=1}h(\ell_i,t)$. The discrete baseband equivalent channel seen by the receiver is the sinc reconstructed impulse response having discrete integer multiples $n$ of sampling intervals $T_s=\frac{T}{M}$ on time $t$ \cite{LR30}. The discrete baseband equivalent delay-time channel model can then be generally expressed as
\begin{align} h_e(l,n)=&\sum^P_{i=1} h(\ell_i,t)|_{t=n\frac{T}{M}} {\rm sinc}(l-\ell_i)\nonumber \\ =&\sum^P_{i=1} \sum _{\kappa\in\boldsymbol{\kappa}_{\ell_i}} h_i{e}^{j2\pi \kappa \frac{(n-\ell_i)}{NM}} {\rm sinc}(l-\ell_i), \label{eq28}
\end{align} where ${\rm sinc}(x)={\rm sin}(\pi x)/{(\pi x)}$. (\ref{eq28}) considers all off-grid delay $\ell_i$ and Doppler $\kappa$ in the system model.
\subsection{Receiver}
At the receiver side, the received signal $r(t)$ can be modeled as a convolution with the channel response $h(\tau,t)$ given by \cite{LR30}\begin{align} r(t)= \int _{0}^{\tau_{max}} h(\tau,t)s(t-\tau)\, d\tau \label{eq230} \end{align} Then, $r(t)$ is first sent into a multicarrier demodulator with a matched filtering to obtain $Y[n,m]$ \cite{Hadani2017OrthogonalModulation} \begin{align} Y[n,m]=&\int g_{rx}^{*}(t-nT) r(t) e^{-j2 \pi m\Delta f(t-nT)} \mathrm {d}t \label{eq231} \end{align}
Based on $Y[n,m]$, the symplectic fast Fourier transform (SFFT) is performed to convert TF signals into DD signals, yielding \begin{align} y[k,j] = \frac {1}{\sqrt {NM}}\sum _{n=0}^{N-1} \sum _{m=0}^{M-1} Y[n,m] e^{-j2\pi \left({{\frac{nk }{ N}}-{\frac{mj }{ M}}}\right)}. \label{eq233}\end{align}
\subsection{Input-Output Relations for ZP-OTFS in Time Domain}
The structure of the time domain channel lacks resolved Doppler shifts, which presents an opportunity to simplify signal detection in this domain, especially when the channel has a complex Doppler response. The linear input-output relation for all signals in the time domain can be expressed in a holistic manner
\begin{align}
    {\bf r}={\bf H}\cdot {\bf s}+{\bf z},
\end{align}
where $\bf r$ is the time domain received signal vector: ${\bf r} = [{\bf r}_{0}^T,...,{\bf r}^T_{N-1}]^T$ $\in \mathbb {C}^{MN\times 1}$ while ${\bf r}_{n}$, $n \in\{ 0,...,N-1\}$, represents the received $n$-th block signals with $M$ elements. And $\bf s$ is the time domain transmitted signals ${\bf s}=[{\bf s}_{0}^T,...,{\bf s}_{N-1}^T]^T $ $\in \mathbb {C}^{MN\times 1}$, where each ${\bf s}_{n}$ represents the $n$-th transmitted block signals also having $M$ elements. Additionally, $\bf z$ is the time domain independent and identically distributed (i.i.d) additive white Gaussian noise (AWGN) with variance $\sigma_{n}^2$. ${\bf H}$ is the time domain channel matrix that has $N$ blocks ${\bf H}_{n}$ along the main diagonal, i.e., ${\bf H}=\text {diag}\{{{\bf H}_{0}},...,{\bf H}_{N-1}\}$ $\in \mathbb {C}^{MN\times MN}$, which is shown in Fig. \ref{f:fig21}.
\begin{figure}[!t]
\includegraphics[width=\linewidth]{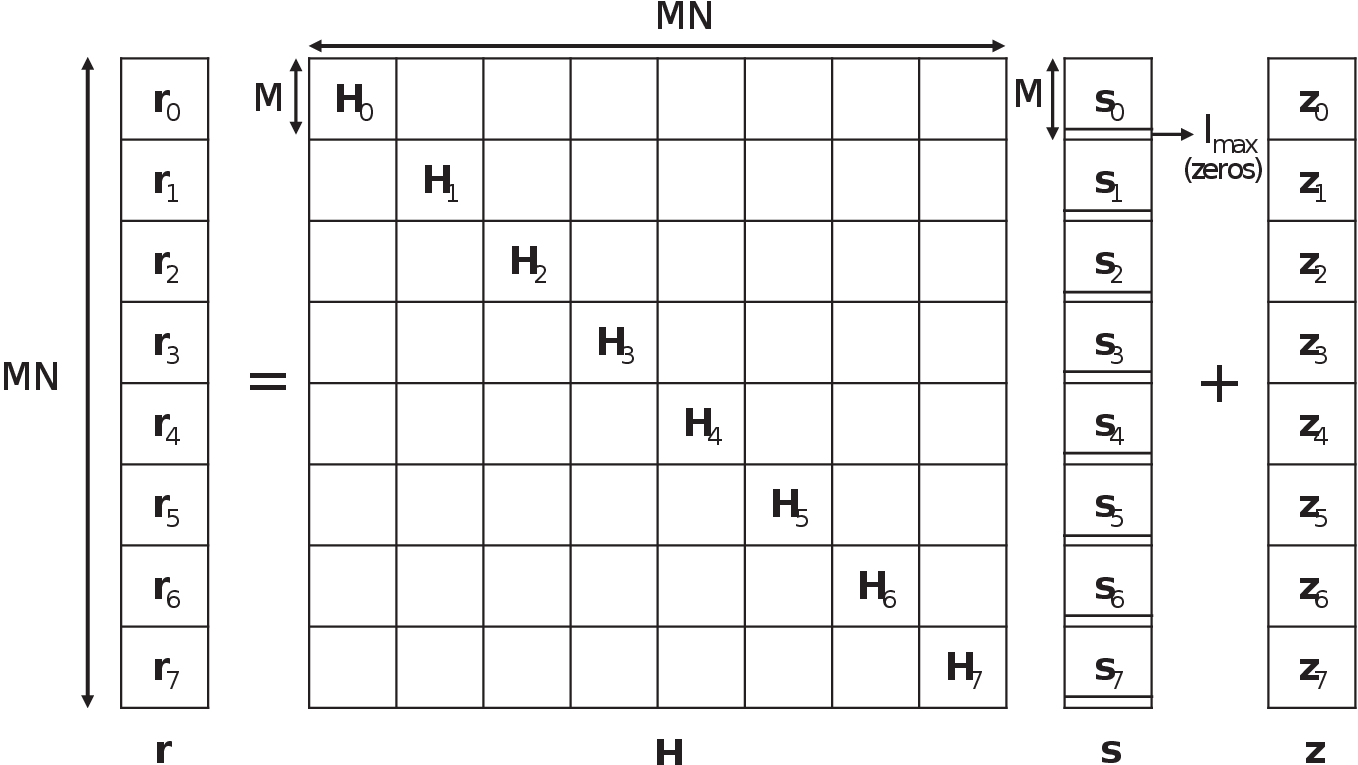}
\caption{Time domain input-output relation for all symbols with $N=8,M=8,l_{max}=3$.}
\label{f:fig21}
\end{figure}
The zero paddings added into the DD grid partition the time domain signals into $N$ orthogonal blocks ${\bf s}_{n}$, where each signal block ${\bf s}_{n}$ does not interfere with any other blocks ${\bf s}_{n'}, \forall n'\neq n$. This allows parallel processing for all $N$ blocks simultaneously \cite{LR30}. For this reason, any detection method that is applied in one of the blocks ${\bf s}_{n}$ can be concurrently executed for the rest of the blocks due to the independence. Now let us look closely at the input-output relation at one particular transmitted signal block ${\bf s}_{n}$.\par
The input-output relation at one particular transmitted signal block ${\bf s}_{n}$ can be expressed as
\begin{align}
{{\bf r}_{n}}={{\bf H}_{n}}\cdot {{\bf s}_{n}}+{{\bf z}_{n}},
\end{align}
where ${\bf r}_{n} = [{r}_{n,0},...,{r}_{n,(M-1)}]^T \in \mathbb {C}^{M\times 1}$ is the $n$-th received signal block that contains $M$ elements and each ${r}_{n,m}$ represents the $m$-th element in the $n$-th block. ${\bf s}_{n} = [{s}_{n,0},...,{s}_{n,(M-1)}]^T \in \mathbb {C}^{M\times 1}$ is the $n$-th transmitted signal block and each ${s}_{n,m}$ represents the $m$-th element in the $n$-th \blue{transmitted} block. ${{\bf H}_{n}}\in \mathbb {C}^{M\times M}$ is the channel matrix for the $n$-th transmitted signal block and the elements in ${{\bf H}_{n}}$ will be provided later, and ${\bf z}_{n}$ is the noise vector with variance $\sigma_{n}^2$.
Note that the last few elements in ${\bf s}_{n}$ are zeros due to the insertion of zero paddings in DD domain. As for the composition of ${\bf H}_{n}$, the position of non-zero elements $h_{m,m-\tilde{l}}$, $\forall \tilde{l} \in l$ where $l$ is the set of delay taps for all $P$ paths as defined in (\ref{eq28}), are determined by the delay index $\tilde{l}$. The rest of the elements $h_{m,m-\tilde{l}}$, $\forall \tilde{l} \notin l$, are zeros that are left blank in Fig. \ref{f:fig22}. The derivation of non-zero elements in ${\bf H}_{n}$ are given by \cite{LR30} \begin{align} {h}_{m,m-\tilde{l}}=\tilde{\mathbf { \boldsymbol {\nu } }}_{m,\tilde{l}}[n], \label{eq235} \end{align} where $\tilde{\boldsymbol {\nu }}_{m,\tilde{l}}[n] = \sum _{\kappa\in\boldsymbol{\kappa}_{\tilde{l}}} h_i\alpha^{\kappa (m-\tilde{l})}{e}^{\frac{j2\pi \kappa n}{N}}$, and we define {$\alpha \triangleq e^{\frac{j2\pi}{NM}}$}. Moving forward, we refer to each element ${s}_{n,m}$ in ${\bf s}_{n}$ as a ‘signal layer’ (or the $m$-th layer). It is evident that each signal layer corresponds to only a fraction of the entire channel ${\bf H}_{n}$, and we shall refer to this portion of the channel as a ‘sub-channel’ of layer ${s}_{n,m}$. These sub-channels are \blue{highlighted in different colors} in Fig. \ref{f:fig22}. Notably, the largest size of the sub-channel is only $(l_{max}+1)$ $\times$ $(2l_{max}+1)$. \blue{This means that the complexity of computation of MMSE filtering weights calculation can be made lower, which will be }discussed in the upcoming section. Furthermore, due to the insertion of zero paddings, the first layer at one symbol block is not interfered with by the layer at the last symbol block. Therefore, the detection process can be initiated directly from the first layer without requiring interference cancellation.
\begin{figure}[!t]
\includegraphics[width=\linewidth]{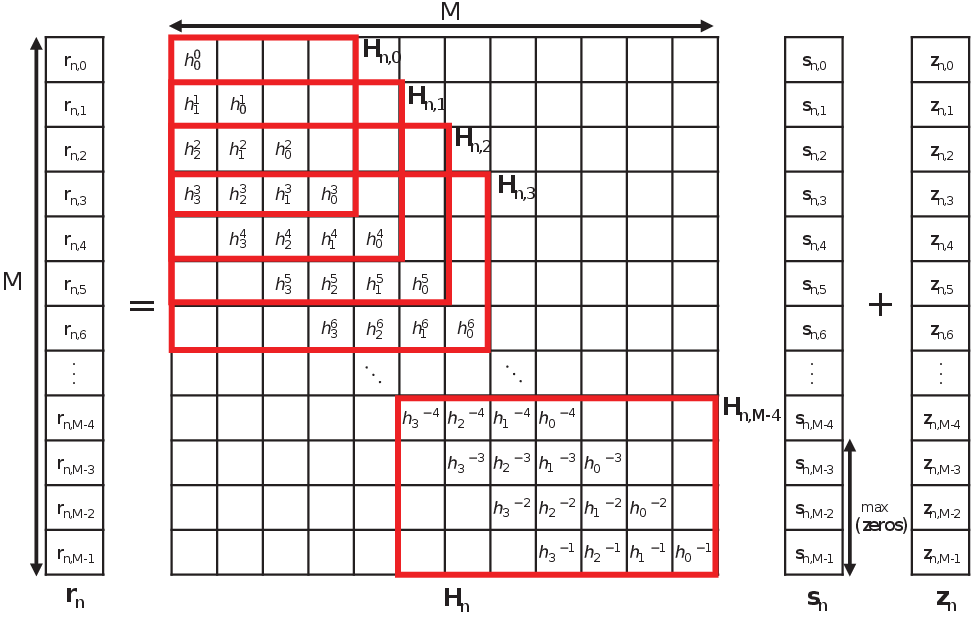}
\caption{{Time domain input-output relation for the $n$-th block with $l_{max}=3$.}}
\label{f:fig22}
\end{figure}

\section{Iterative SIC-MMSE with Hard and Soft Estimates}
In this section, we investigate the SIC-MMSE detection for the ZP-OTFS system. The addition of zero paddings in the DD domain renders the $N$ blocks in the time domain orthogonal, allowing for parallel processing among the $N$ blocks. Within each block, the transmitted symbols are detected based on layers, as explained in the previous section. The symbol layers, which are interference to the target layer, are successively canceled using their statistical information from the received signal vector. Then, the received signal vector undergoes MMSE filtering to maximize the signal-to-interference-plus-noise-ratio (SINR) of the target transmit signal. Our proposed methodologies investigate both hard and soft interference cancellations. Although hard interference cancellation is simpler, soft estimates can achieve a significant performance improvement, particularly for higher-order modulations.
\subsection{SIC-MMSE Detection in the Time Domain with Hard Interference Cancellation}
Without loss of generality, we only focus on detecting the $(M-l_{max})$ time domain signals at one particular block ${\bf s}_{n}$, and the rest of the blocks follow the same methodology. Let $s_{n,m}$ denote the $m$-th transmit symbol at the $n$-th block and ${\bar{\bf r}}_{n,m}$ denote the receive signal vector that contains \blue{$s_{n,m}$} due to the multipath. For instance, in Fig. \ref{f:fig22}, the receive signal vector that corresponds to the signal layer $s_{n,0}$ is ${{\bar{\bf r}}_{n,0}} = [{r_{n,0},r_{n,1},r_{n,2}, r_{n,3}}]^T$, and the sub-channel for the signal layer $s_{n,0}$ is the matrix ${\bf H}_{n,0}$. In general, the received symbol vector $\bar{{\bf r}}_{n,m} \in \mathbb {C}^{(l_{max}+1)\times 1}$ can be expressed as
\begin{align}
{{\bar{\bf r}}_{n,m}} =& \sum_{j=0}^{l'+l_{max}}{\bf H}_{n,m}[:,j]{s}_{n,(m'+j)}+{\bf z}_{n,m}, \nonumber\\ &0 \le m \le M-l_{max}, \label{eq300}
\end{align}
where $m'=\max\{m-l_{max},0\}$, $l'=\min\{m,l_{max}\}$, and ${\bf H}_{n,m}[:,j]$ is the $j$-th column in ${\bf H}_{n,m}$, {${\bf z}_{n,m}$ is the time domain Gaussian noise vector having the same length as ${{\bar{\bf r}}_{n,m}}$}.
The time domain hard interference cancellation in the $i$-th iteration can then be written as
\begin{align}
{\hat{\bf r}^{(i)}_{n,m}} =& {\bar{\bf r}}_{n,m}-\sum_{j=0}^{l'-1}{\bf H}_{n,m}[:,j]\bar{s}_{n,(m'+j)}^{(i)}\nonumber\\&-\sum_{k=l'+1}^{l'+l_{max}}{\bf H}_{n,m}[:,k]\bar{s}_{n,(m'+k)}^{(i-1)}, \label{eq310}
\end{align}
where ${\bar s}^{(i)}$ and ${\bar s}^{(i-1)}$ are the hard estimates of time domain interference variables detected in the $i$-th iteration and $(i-1)$-th iteration respectively and {${\bar s}^{(i)}$ are initiated as zeros in the first iteration}. The estimated output signal ${\hat{s}}_{n,m}$ after going through the MMSE filtering is then
\begin{align}
{\hat{s}}^{(i)}_{n,m} &= {\bf w}_{n,m}^{(i)}{\hat{\bf r}^{(i)}_{n,m}}\nonumber\\&= \mu^{(i)}_{n,m}s^{(i)}_{n,m}+{\hat {z}^{(i)}_{n,m}}, \label{eq320}
\end{align}
where $\mu^{(i)}_{n,m} = {\bf w}^{(i)}_{n,m}{\bf H}_{n,m}[:,l']$, ${\hat {z}^{(i)}_{n,m}}$ is the MMSE-suppressed interference, and ${\bf w}_{n,m}^{(i)}$ are the MMSE filter coefficients and it is calculated as
\begin{align}
{\bf w}^{(i)}_{n,m} = {\bf H}^{H}_{n,m}[:,l']({\bf H}_{n,m}{\bf V}_{n,m}{\bf H}_{n,m}^{H}+\sigma_{n}^2{{\bf I}_{l_{\max}+1}})^{-1}, \label{eq330}
\end{align}
where ${\bf H}_{n,m}$ is the sub-channel matrix, ${\bf H}_{n,m}[:,l']$ is the $l'$-th column of ${\bf H}_{n,m}$ and $l'$ is defined after (\ref{eq300}), ${\bf V}_{n,m}$ is the covariance matrix. Since hard decisions are considered, it is initialized as {${\bf V}_{n,m} = $ {diag}$\{0,...,0,E_s,...,E_s\}$ during the first iteration and ${\bf V}_{n,m} = $ {diag}$\{0,...,0,E_s,0,...,0\}$} from the second iteration and after, where $E_{s}$ is the symbol energy.
We then normalize ${\hat{s}}^{(i)}_{n,m}$ in (\ref{eq320}) by
\begin{align}
{\hat{s}}^{(i)}_{n,m} = \frac{{\hat{s}}^{(i)}_{n,m}}{\mu^{(i)}_{n,m}}=s^{(i)}_{n,m}+\frac{{\hat {z}^{(i)}_{n,m}}}{\mu^{(i)}_{n,m}}. \label{eq39}
\end{align}

After detecting the $m$-th layer at the $n$-th block $s_{n,m}$, the $m$-th layer of the rest of the $N-1$ blocks will follow the same procedure in (\ref{eq310}) and (\ref{eq320}). When all the blocks have been processed for the $m$-th signal layer, we will then have a length $N$ estimated signal vector ${\hat{\bf s}}^{(i)}_{m} = [{\hat s}_{0,m}^{(i)},{\hat s}_{1,m}^{(i)},...,{\hat s}^{(i)}_{(N-1),m}]$ from the time domain SIC-MMSE detection. Then, they are transformed into DD domain via the $N$-point FFT
\begin{align}
{\hat{\bf y}}_{m}= {{\bf F}_{N}}{\hat{\bf s}^{(i)}_{m}}.\label{eq33}
\end{align}
Based on the DD domain observation ${\hat{\bf y}}_{m}$, the hard decisions are made by the maximum likelihood (ML) criterion
\begin{align}
{\hat{{x}}}_{m}[n]=\arg \min _{a\in \mathcal {Q}} \left|{a-{\hat{y}}_{m}}[n]\right|,\label{eq34}
\end{align}
where $a$ is \blue{a} symbol from \blue{the constellation set} $\mathcal{Q}$. \blue{The hard estimate symbol vector from \eqref{eq34}} ${\hat{\bf x}}_{m}$ will be transformed back into time domain through an $N$-point IFFT
\begin{align}
{\bf \bar{s}}^{(i+1)}_{m}= {{\bf F}_{N}^{\dag }}\hat{{\bf x}}_{m}.\label{eq35}
\end{align}
From here, the newly acquired hard estimates in the time domain ${\bf \bar{s}}^{(i+1)}_{m}$ will be used as initial estimates for interference cancellation in (\ref{eq310}), during the next iteration of detection.

\subsection{Soft Interference Cancellation}
The soft interference cancellation for SIC-MMSE essentially takes advantage of the constellation constraint in the DD domain, and continuously improves the statistical information of the estimated transmit symbols in DD domain via iterations. Given the received signal vector ${\bf r}_{n,m}$ for the target signal layer $s_{n,m}$ and the sub-channel ${\bf H}_{n,m}$, the time domain soft interference cancellation in the $i$-th iteration is similar to (\ref{eq310}) and it is given by
{\begin{align}
{\hat{\bf r}^{(i)}_{n,m}} =& \bar{{\bf r}}_{n,m}-\sum_{j=0}^{l'-1}{\bf H}_{n,m}[:,j]\tilde{s}_{n,(m'+j)}^{(i)}\\&-\sum_{k=l'+1}^{l'+l_{max}}{\bf H}_{n,m}[:,k]\tilde{s}_{n,(m'+k)}^{(i-1)} \nonumber\\
=&\sum^{l'+l_{max}}_{j=0}{\bf H}{[:,j]}s_{n,m'+j}+{\bf z}_{n,m}-\sum_{j=0}^{l'-1}{\bf H}_{n,m}[:,j]\tilde{s}_{n,(m'+j)}^{(i)}\nonumber\\&-\sum_{k=l'+1}^{l'+l_{max}}{\bf H}_{n,m}[:,k]\tilde{s}_{n,(m'+k)}^{(i-1)}, \label{eq36}
\end{align}}
where $\tilde{s}_{n,(m'+j)}^{(i)}$ and $\tilde{s}_{n,(m'+k)}^{(i-1)}$ are now the soft estimates of time domain interference detected in current $i$-th iteration and last $(i-1)$-th iteration, respectively. Breaking $\sum^{l'+l_{max}}_{j=0}{\bf H}{[:,j]}s_{n,m'+j}$ into $\sum^{l'-1}_{j=0}{\bf H}_{n,m}{[:,j]}s_{n,m'+j}+\sum^{l'+l_{max}}_{j=l'+1}{\bf H}_{n,m}{[:,j]}s_{n,m'+j}+{\bf H}_{n,m}{[:,l']}s_{n,m}$, (\ref{eq36}) can be rewritten as
{\begin{align}
{\hat{\bf r}^{(i)}_{n,m}}=&\sum^{l'-1}_{j=0}{\bf H}_{n,m}{[:,j]}s_{n,m'+j} - \sum_{j=0}^{l'-1}{\bf H}_{n,m}[:,j]\tilde{s}_{n,(m'+j)}^{(i)}\nonumber\\&+\sum^{l'+l_{max}}_{j=l'+1}{\bf H}_{n,m}{[:,j]}s_{n,m'+j} - \sum^{l'+l_{max}}_{k=l'+1}{\bf H}_{n,m}{[:,k]}\tilde{s}_{n,(m'+k)}^{(i-1)}\nonumber\\&+{\bf H}_{n,m}{[:,l']}s_{n,m}+{\bf z}_{n,m}\nonumber\\
=&{\bf H}_{n,m}[:,l']s_{n,m}+{\hat {\bf z}^{(i)}_{n,m}},
\end{align}} where ${\hat {\bf z}^{(i)}_{n,m}}$ is the noise plus residual interference and it is thus given by
\begin{align}
{\hat {\bf z}^{(i)}_{n,m}} =& \sum\limits_{j=0}^{l'-1}{\bf H}_{n,m}[:,j]e_{n,(m'+j)}^{(i)}+\sum\limits_{k=l'+1}^{l'+l_{max}}{\bf H}_{n,m}[:,k]e_{n,(m'+k)}^{(i-1)}\nonumber\\&+{\bf z}_{n,m},
\end{align} where $e^{(i)}_{n,(m'+j)}$ and $e
^{(i-1)}_{n,(m'+k)}$ are the residual interferences left in the current iteration's and last iteration's interference cancellation process, respectively. {The residual interference is defined as $e^{(i)}_{n,m} = s_{n,m} - \tilde{s}^{(i)}_{n,m}$}. After soft interference cancellation is performed, the MMSE filter is applied to the interference canceled signal vector ${\hat{\bf r}^{(i)}_{n,m}}$. The filter coefficients are calculated \blue{in the same way as} (\ref{eq330}), \blue{except that} the covariance matrix is {${\bf V}_{n,m} = $ {diag}$\{{\mathbb V}\{e^{(i)}_{n,(m'+1)}\},...,{\mathbb V}\{e^{(i)}_{n,(m'+1'-1)}\}, {E_s}, {\mathbb V}\{e^{(i-1)}_{n,(m'+1'+1)}\},\\...,{\mathbb V}\{e^{(i-1)}_{n,(m'+1'+l_{max})}\}\}$}, in which ${\mathbb V}\{{\cdot}\}$ denotes the variance and ${\mathbb V}\{e^{(i-1)}_{n,(m'+k)}\}=E_s$ for $i=1$ and $k \in \mathbb \{l'+1,...,l'+l_{max}\}$. The details on the variance calculation will be provided later. After applying the MMSE filtering to ${\hat{\bf r}}^{(i)}_{n,m}$, the output estimate ${\hat{s}^{(i)}_{n,m}}$ can be directly written in terms of the time domain input variable $s^{(i)}_{n,m}$ as \cite{Lee2005NewSystems}

\begin{align}
{\hat{s}}^{(i)}_{n,m} =& {\bf w}^{(i)}_{n,m}{\hat{\bf r}}^{(i)}_{n,m}\nonumber\\ =&{\bf w}^{(i)}_{n,m}{\bf H}_{n,m}[:,l']s^{(i)}_{n,m}+\sum_{j=1}^{l'-1} {\bf w}^{(i)}_{n,m}{\bf H}^{H}_{n,m}[:,j]e_{n,(m'+j)}^{(i)}\nonumber\\&+ \sum\limits_{k=l'+1}^{l'+l_{max}} {\bf w}^{(i)}_{n,m}{\bf H}_{n,m}[:,k]e_{n,(m'+k)}^{(i-1)}+{\bf w}^{(i)}_{n,m}{\bf z}^{(i)}_{n,m} \nonumber\\=& \mu^{(i)}_{n,m}s^{(i)}_{n,m}+{\tilde {z}^{(i)}_{n,m}}, \label{eq38}
\end{align}
where $\mu^{(i)}_{n,m} = {\bf w}^{(i)}_{n,m}{\bf H}_{n,m}[:,l']$, ${\bf w}^{(i)}_{n,m}$ follows \eqref{eq330}, ${\tilde {z}^{(i)}_{n,m}}$ is the MMSE-suppressed noise plus residual interference and it is assumed to be Gaussian distributed \cite{Lee2005NewSystems}. Then, ${\hat{s}}^{(i)}_{n,m}$ is normalized as in (\ref{eq39}).
Moreover, the post-MMSE variance of ${\hat{s}}^{(i)}_{n,m}$ can then be simply computed by \cite{Lee2005NewSystems}
\begin{align}
\sigma_{\tilde{z}^{(i)}_{n,m}}^2 = \frac{\mu^{(i)}_{n,m}-|\mu^{(i)}_{n,m}|^2}{|\mu^{(i)}_{n,m}|^2}E_{s}. \label{eq399}
\end{align}
After performing SIC-MMSE for the $m$-th symbol in all the other $N-1$ blocks, we then have the time domain estimated symbol vector ${\hat{\bf s}}^{(i)}_{m} = [{{\hat s}}^{(i)}_{0,m},...,{{\hat s}}^{(i)}_{(N-1),m}]^T$
and the post-MMSE covariance matrix ${\tilde{\bf V}_{m}}$ with each diagonal element is obtained from (\ref{eq399}), i.e., ${\tilde{\bf V}_{m}}=$ {diag} $\{\sigma_{\tilde{z}^{(i)}_{0,m}}^2,...,\sigma_{\tilde{z}^{(i)}_{(N-1),m}}^2\}$. Then, ${\hat{\bf s}}^{(i)}_{m}$ will be transformed into DD domain via an $N$-point FFT
\begin{align}
{\hat{\bf y}}_{m}= {{\bf F}_{N}}{\hat{\bf s}^{(i)}_{m}} = {\bf x}_{m}+{{\bf F}_{N}}{\tilde{\bf{z}}_{m}},\label{eq311}
\end{align}
where ${\bf x}_{m}$ is the $m$-th symbol vector with $N$ elements and ${{\bf F}_{N}}{\tilde{\bf{z}}_{m}}$ is the equivalent noise in DD domain. The covariance matrix for ${{\bf F}_{N}}{\tilde{\bf{z}}_{m}}$ is assumed to be the same as ${\tilde{\bf V}_{m}}$ due to unitary transformation \cite{Li2021CrossModulation}. Finally, based on the observations ${\hat{\bf y}}_{m}$, the DD domain detection can be carried out in a symbol-by-symbol fashion
\begin{align}
{\bar{{x}}}_{m}[n]=\arg \max _{x_{m}[n]}{\text{Pr} }\{x_{m}[n]|\hat{y}_{m}[n]\}.\label{eq312}
\end{align}
According to the Bayes rule, the probability density function in (\ref{eq312}) is proportional to ${\text {Pr}}\{\hat{y}_{m}[n]|x_{m}[n]\}$, i.e. ${\text {Pr}}\{x_{m}[n]|\hat{y}_{m}[n]\} \propto {\text {Pr}}\{\hat{y}_{m}[n]|x_{m}[n]\}$. Thus the normalized conditional probability density function can be computed as
\begin{align}
{\text {Pr}}\{\hat{y}_{m}[n]|x_{m}[n]=a\} = \frac{\text {exp}\Bigl(\frac{|\hat{y}_{m}[n]-a|^2}{{\tilde{{\bf V}}_{m}[n,n]}}\Bigl)}{{\sum_{a\in \mathcal {Q}}}\text {exp}\Bigl(\frac{|\hat{y}_{m}[n]-a|^2}{{\tilde{{\bf V}}_{m}[n,n]}}\Bigl)}. 
\end{align}
Therefore, we can obtain the \emph{a posteriori} mean a.k.a. the soft estimate in the DD domain by
\begin{align}
{\tilde{{x}}}_{m}[n]&={\mathbb {E}}\{\hat{y}_{m}[n]|x_{m}[n]\}= \sum_{a\in \mathcal {Q}}{\text {Pr}}\{\hat{y}_{m}[n]|x_{m}[n]=a\}\times a. \label{eq312222}
\end{align}
And the \emph{a posteriori} variance
\begin{align}
{{\bf V}}_{m}[n,n]&= {\mathbb {E}\{{|x_{m}[n]-\mathbb {E}}\{\hat{y}_{m}[n]|x_{m}[n]|^2\}\}}\nonumber \\ &=\sum_{a\in \mathcal {Q}}{\text {Pr}}\{\hat{y}_{m}[n]|x_{m}[n]=a\}\times |a-{\tilde{{x}}}_{m}[n]|^2.\label{eq313}
\end{align}
After performing symbol-by-symbol detection for all the $N$ symbols in the DD domain, we will have the soft estimate signal vector ${\tilde{{\bf x}}}_{m}$ with covariance matrix ${\bf V}_{m}$. Then, ${\tilde{\bf x}}_{m}$ is transformed back to the time domain through the $N$-point IFFT, and the time domain covariance matrix is
\begin{align}
{{\bf V}^{(i)}_m}= {{\bf F}_{N}^{\dag}}{\bf V}_{m}{\bf F}_{N}.\label{eq355}
\end{align}The non-diagonal elements in ${{\bf V}^{(i)}_m}$ are not of interest due to the independence assumptions for time domain variables \cite{Li2021CrossModulation}. Finally, the new updated time domain soft estimates ${\tilde{\bf s}}^{(i)}_{m}$ along with their variances ${{\bf V}^{(i)}_m}$ will be used as the \emph{a priori} information for detecting the subsequent signal layer,  i.e., ${\tilde{s}}^{(i)}_{n,m}={\tilde{\bf s}}^{(i)}_{m}[n]$ and ${\mathbb V}\{e^{(i)}_{n,m}\} = {\bf V}^{(i)}_m[n,n]$.
\begin{remark}
{Without loss of generality, the algorithms proposed for ZP-OTFS can be extended to CP-OTFS without difficulties. In CP based systems, interferences may emerge among different blocks, rendering all sub-channel sizes to be $(2l_{max}+1)\times(2l_{max}+1)$, and this channel condition aligns with the scenario encountered by signal layers when the index of targeted signal layer $m$ exceeds ‘$l_{max}$’ in ZP model already. Consequently, the SIC-MMSE algorithm presented in this paper remains applicable and effective for other alternative systems.}
\end{remark}

\section{Low complexity approximate SIC-MMSE}
Despite the fact that the proposed cross-domain SIC-MMSE is a low-complexity algorithm, it is still required to compute $(M-l_{max})N$ MMSE matrix inversion with a total complexity of order $\mathcal{O}((M-l_{max})Nl^3_{max})$. To achieve further complexity reduction, we introduce an approximate SIC-MMSE detector in this section.

\subsection{Approximate SIC-MMSE}
The use of orthogonal $N$ signal blocks in the ZP-OTFS system enables the design of a low-complexity SIC-MMSE algorithm in the time domain, as introduced in the previous section. The iterative interference cancellation of detected layers enables the MMSE filter to better suppress other interfering signals based on its estimated detection output. As the process proceeds, residual interference power continuously decreases, leading to an increase in SINR. Each sub-channel/signal layer requires the calculation of MMSE filter weights, leading to a total of $MN$ sub-channel matrix inversions. In the following, the precise calculation of each individual sub-channel's MMSE filter weights is referred to as the \blue{‘exact MMSE'}, as opposed to the forthcoming approximate MMSE. We observe that small phase variations in adjacent sub-channels may allow for the recycling or reuse of previous filter weights when detecting new signal layers. For example, for the $m$-th signal layer at the $n$-th block, the \blue{exact} MMSE filter coefficient is denoted by ${\bf w}^{*}_{n,m}$. Without recycling, the MMSE weight calculations require to be performed by $N(M-l_{max})$ times. \blue{However}, we can use ${\bf w}^{*}_{n,m}$ as an approximation for the rest of $(\Delta m + m)$ sub-channels at the $n$-th block, where $\Delta m$ is the number of layers to be approximated following the $m$-th layer. \blue{In this way, the number of weights calculations is reduced to $\frac{N(M-l_{max})}{\Delta m}$.} In Fig. 2, for example, when detecting the third layer signal, i.e., $m = 2$, if $\Delta m$ is set as 2, then the estimation of the fourth layer signal and fifth layer signal will be approximated by using the exact MMSE weights ${\bf w}^{*}_{n,2}$ of the third layer. Interestingly, it is also found that the relation of the $m$-th and $(\Delta m + m)$-th sub-channel in the time domain is phase-rotated by the circular shifted Doppler matrix ${\boldsymbol {\nu}} =  {\text {circ}}\{\nu_{1},\nu_{2},...,\nu_{l_{max}}\} = \left[\begin{array}{cccccccc}{{ {\nu }}_{l_{max}}} & \cdots & {{{\nu }}_{2}}&{{\nu }}_{1} & {{ {\nu }}_{l_{max}}} & \cdots & {{{\nu }}_{2}}\\ {{{\nu }}_{1}} & \cdots & {{{\nu }}_{3}}&{{{\nu }}_{2}} & {{{\nu }}_{1}} & \cdots & {{{\nu }}_{3}}\\ \vdots & \ddots & \vdots&\vdots & \vdots & \ddots & \vdots \\ {{{\nu }}_{l_{max}-1}} & \cdots & {{{\nu }}_{1}}&{{{\nu }}_{l_{max}}} & {{{\nu }}_{l_{max}-1}} & \cdots & {{{\nu }}_{1}} \end{array}\right]$ $\in \mathbb {C}^{l_{max}\times (2l_{max}-1)}$ at the exponential order of $\Delta m$, wherein ${{\nu}_{i}}$ is the Doppler tap for the $i$-th path and each column of ${\boldsymbol {\nu}}$ is the circular shift of the previous column. More specifically, the $l_{max}$-th column is $[{\nu }_{1}, {\nu }_{2},...,{\nu }_{l_{max}}]^{T}$, representing the Doppler for the target signal. \blue{Columns $1,\ldots,l_{max}-1$} represent the Dopplers for the previously detected interference signals. \blue{Columns $l_{max}+1,\ldots,2l_{max}-1$} represents the Dopplers for the subsequent to-be-detected interference signals. The relation of the sub-channel ${\bf H}_{n,\Delta m + m}$ and ${\bf H}_{n,m}$ can be represented by the Hadamard (element-wise) product in terms of the Doppler matrix ${\boldsymbol {\nu}}$
\begin{align}
{\bf H}_{n,\Delta m + m} = {\bf H}_{n,m} \circ \boldsymbol ({e^{j\frac{2\pi}{MN}{\boldsymbol \nu}}})^{\Delta m}, m\geq {{l_{max}}}. \label{42}
\end{align}
\blue{It} can be observed from (\ref{42}) that the phase change caused by the Doppler shift gradually increases as $\Delta m$ increases\blue{.} \blue{This} implies that the difference of the $m$-th sub-channel and $(\Delta m + m)$-th sub-channel becomes more substantial as $\Delta m$ increases. When \blue{$\Delta m$ is large and} the approximated filter coefficients are used on the $(\Delta m + m)$-th sub-channel, the MSE is no longer minimized\blue{.} \blue{In this case,} the \blue{approximated} filter will introduce additional MSE $\Delta \beta$:
\begin{align}
\Delta \beta = \beta-\beta_{min},\label{43}
\end{align}
where $\beta$ is the MSE for signals filtered by the approximate MMSE, and $\beta_{min}$ is the minimum MSE filtered by the exact MMSE. \blue{Assume} that the Doppler matrix $\boldsymbol \nu$ is approximated by the largest Doppler index scalar $\nu_{max}$ \blue{in} the worst scenario. We have the following proposition that describes the relationship among $\Delta \beta$, $\Delta m$ and $\nu_{max}$.
\begin{proposition}
The number of approximated signal layer $\Delta m$ is a function of $\nu_{max}$ and $\Delta \beta$, and it is given by
\begin{align}
\Delta m  = \frac{MN(cos^{-1}(\frac{\Delta \beta}{2a})-\theta)}{2\pi\nu_{max}}, \label{44}
\end{align}
where $a$ and $\theta$ are the magnitude and phase of $\mu_{n,m} = {\bf w}^{*}_{n,m}{\bf H}_{n,m}[:,l']$, respectively.
\end{proposition}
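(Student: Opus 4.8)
The plan is to collapse the whole question onto a single scalar, the effective filter gain $\mu = {\bf w}\,{\bf H}[:,l']$, and to track how its phase drifts when the exact weights ${\bf w}^{*}_{n,m}$ are recycled on a later sub-channel; the excess MSE $\Delta\beta$ is then read off from that phase drift and the relation is inverted for $\Delta m$. The two ingredients I would start from are the sub-channel relation \eqref{42} and the stated worst-case scalarization $\boldsymbol\nu \to \nu_{max}$, which turns the Hadamard factor $(e^{j\frac{2\pi}{MN}\boldsymbol\nu})^{\Delta m}$ into a single common phasor $e^{j\frac{2\pi}{MN}\nu_{max}\Delta m}$ multiplying every entry of ${\bf H}_{n,m}$.

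First I would apply the recycled weights to the $(\Delta m+m)$-th sub-channel. Since the scalarization rotates the entire sub-channel, and hence its $l'$-th column, by one common phase,
\[
\mu_{n,\Delta m+m} = {\bf w}^{*}_{n,m}{\bf H}_{n,\Delta m+m}[:,l'] \approx \mu_{n,m}\, e^{j\frac{2\pi}{MN}\nu_{max}\Delta m} = a\, e^{j(\theta + \frac{2\pi}{MN}\nu_{max}\Delta m)},
\]
so recycling leaves the magnitude $a$ of $\mu_{n,m}=a e^{j\theta}$ intact and merely advances its phase by $\phi \triangleq \frac{2\pi}{MN}\nu_{max}\Delta m$. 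Next I would write the MSE of a generic linear filter on the model $\bar{\bf r} = {\bf H}[:,l']\,s + (\text{interference}+\text{noise})$ as $\beta = E_s|1-\mu|^2 + {\bf w}\,{\bf Q}\,{\bf w}^{H}$, where ${\bf Q}$ is the interference-plus-noise covariance. The key observation is that the suppression term contributes through quantities of the form $|{\bf w}^{*}_{n,m}{\bf H}[:,j]|^2 V_j$, which are invariant to a common phasor; thus, after scalarization, ${\bf w}^{*}_{n,m}{\bf Q}\,{\bf w}^{*H}_{n,m}$ does not move with $\Delta m$. The only $\Delta m$-dependence therefore enters through the signal cross-term $\mathrm{Re}(\mu_{n,\Delta m+m}) = a\cos(\theta+\phi)$, and charging the phase-independent constants to $\beta_{min}$ leaves (with $E_s$ normalised to unity) the excess MSE in the form $\Delta\beta = 2a\cos(\theta+\phi)$.

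Finally I would invert this relation. Setting $\cos(\theta+\phi) = \tfrac{\Delta\beta}{2a}$ gives $\theta + \phi = \cos^{-1}(\tfrac{\Delta\beta}{2a})$; substituting $\phi = \frac{2\pi}{MN}\nu_{max}\Delta m$ and solving for $\Delta m$ yields \eqref{44} directly.

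I expect the main obstacle to sit in the second step, specifically in pinning down the exact form $\Delta\beta = 2a\cos(\theta+\phi)$ rather than some nearby variant. This hinges on (i) genuinely justifying that ${\bf w}\,{\bf Q}\,{\bf w}^{H}$ is phase-invariant, which relies on the scalarization forcing one common rotation of all columns in place of the true column-dependent rotations carried by $\boldsymbol\nu$ in \eqref{42}, and (ii) the bookkeeping of the phase-independent terms, i.e. deciding precisely what is absorbed into $\beta_{min}$, since that choice is exactly what makes $\beta-\beta_{min}$ collapse to the single cosine. Two lesser points I would still need to settle are selecting the branch of $\cos^{-1}$ on which $\Delta m \mapsto \Delta\beta$ is monotone, so the inversion is single-valued, and arguing that replacing $\boldsymbol\nu$ by $\nu_{max}$ is conservative — it over-estimates the phase drift and hence under-estimates the admissible $\Delta m$, making \eqref{44} a safe design rule.
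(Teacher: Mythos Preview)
Your proposal is correct and follows essentially the same route as the paper's proof: scalarize $\boldsymbol\nu\to\nu_{max}$ so the whole sub-channel rotates by a common phasor, observe that $|\mu|^2$ and all $|{\bf w}^*_{n,m}{\bf H}[:,j]|^2$ terms are phase-invariant, isolate the $\Delta m$-dependence in $\mathrm{Re}(\mu_{n,\Delta m+m})=a\cos(\theta+\phi)$, and invert. Your flagged concern about the bookkeeping of $\beta_{min}$ is exactly the step the paper handles by declaring the phase-invariant block to be $\beta_{min}$; the paper's derivation is no more rigorous than yours on that point, so your caution is appropriate but not an actual gap.
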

\begin{proof}
The proof of (\ref{44}) is provided in Appendix A.
\end{proof}

\begin{figure}[!t]

\includegraphics[width=3.2in]{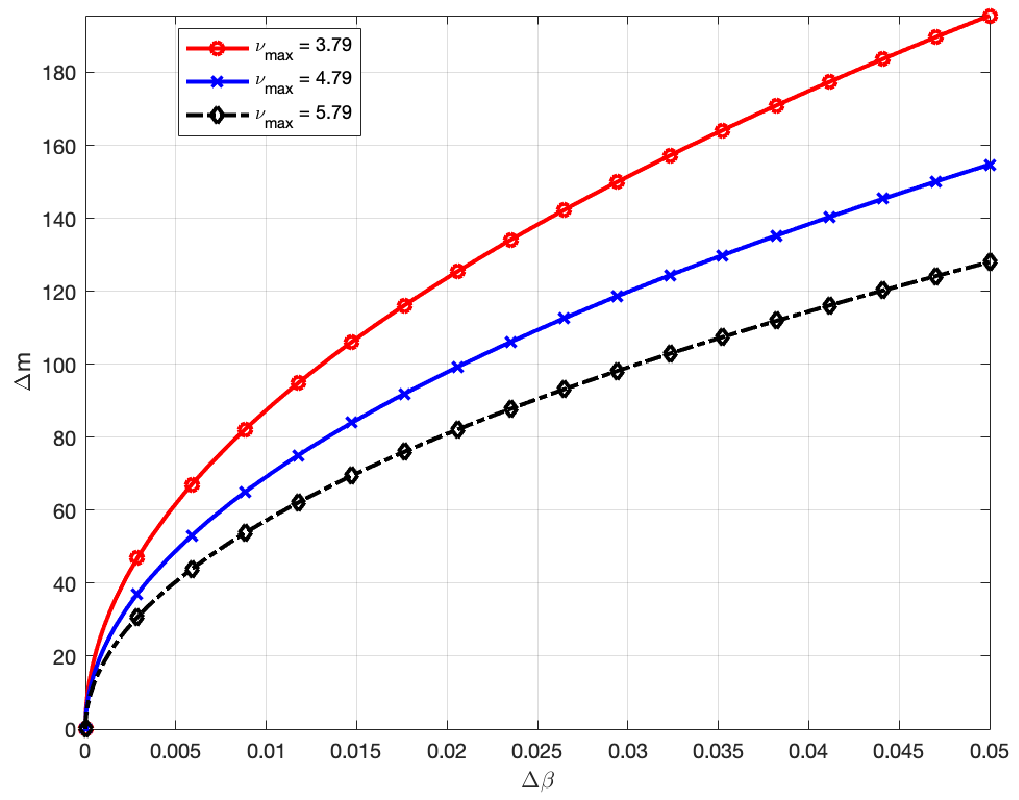}
\caption{Relation between the number of approximated sub-channels $\Delta m$  and the given MSE tolerance $\Delta \beta$, when maximum Doppler index is $\nu_{max}=3.79,  4.79,  5.79$ which correspond to Doppler shift of 888Hz, 1122Hz, and 1357Hz, respectively.}
\label{f:fig1}
\end{figure}

Now we discuss the impact of $\nu_{max}$ and $\Delta \beta$ on the choice of $\Delta m$. \blue{Consider an OTFS system with $M=512$ and $N=128$.} Fig. \ref{f:fig1} and Fig. \ref{f:fig2} show how $\Delta m$ is affected by $\nu_{max}$ or $\Delta \beta$
when one of them is fixed. It can be seen from Fig. \ref{f:fig1} that $\Delta m$ increases with respect to $\Delta \beta$ monotonically when $\nu_{max}$ is fixed. This is quite reasonable as a larger MSE tolerance allows more freedom for more signal layers to be approximated, at the cost of increased errors. On the other hand, when MSE tolerance $\Delta \beta$ is fixed, the number of approximated signal layers $\Delta m$ is inversely proportional to Doppler $\nu_{max}$. It is also a justifiable result since a stronger Doppler effect causes a larger phase shift among sub-channels, which increases the constraints on the number of signal layers that can be approximated by ${\bf w}^{*}_{n,m}$. Define $\Delta m_{c}$ to be the number of symbols within the coherence time - namely ‘coherence symbols’, i.e. $\Delta m_{c}\times(\frac{T}{M}) = T_{c}$, where $\frac{T}{M}$ is the time resolution and $T_{c}$ is the coherence time. The coherence time $T_{c}$ can also be calculated from the inverse of the Doppler spread, i.e. $\frac{1}{B_{d}} = T_{c}$, where $B_{d} = 2\nu_{max}\times\frac{1}{NT}$. Therefore, we arrive at the relation of $\Delta m_{c}$ and $\nu_{max}$
\begin{align}
{\Delta m_{c}}=\frac{MN}{2\nu_{max}}.\label{eq3121}
\end{align}
Substitute (\ref{eq3121}) into (\ref{44}) \blue{gives} the relation between $\Delta m$ and $\Delta m_{c}$
\begin{align}
{\Delta m}=\frac{\Delta m_{c}}{\pi}\left({\cos^{-1}}\left(\frac{\Delta \beta}{2a}-\theta\right)\right).\label{eq3122}
\end{align}
From (\ref{eq3122}), it can be observed that ${\Delta m}$ varies linearly with $\Delta m_{c}$ and they have a positive correlation. This implies that a longer channel's coherence time allows more freedom for the number of approximated sub-channels when the error tolerance $\Delta \beta$ is fixed.  Note that because the size of OTFS frame $M$ and $N$ influence the delay and Doppler resolution for a given system bandwidth and finite duration, they also play a role in determining the number of approximated sub-channels in (\ref{44}) and the coherence symbols in (\ref{eq3121}) and consequently in (\ref{eq3122}). \par
So far we have discussed how to re-use the exact ${\bf w}^{*}_{n,m}$ calculated from the $m$-th sub-channel as an approximated MMSE weights for the rest of $\Delta m$ sub-channels. Under the MSE tolerance $\Delta \beta$ and channel Doppler index $\nu_{max}$ restrictions, the largest number of approximated signal layers allowed $\Delta m$ can then be determined from (\ref{44}). In other words, this $\Delta m$ represents the largest number of subsequent sub-channels that can be approximated by ${\bf w}^{*}_{n,m}$ without exceeding the given error threshold $\Delta \beta$ under the worst assumed channel condition. After this, the detector will re-calculate the exact MMSE filter coefficients ${\bf w}^{*}_{n,\Delta m+m+1}$ at $(\Delta m+m+1)$-th sub-channel and use this newly generated exact MMSE as an approximation for the next $\Delta m$ sub-channels. This process repeats until all the $M-l_{max}$ signals have been detected. The detailed steps of the approximated SIC-MMSE are in Algorithm 1.
{\begin{algorithm}[t]
  \label{BP}

  \caption{Iterative Cross Domain Approximate SIC-MMSE}
  \textbf{Input}: ${\bf r}$, ${\bf H}$, $\tilde{\bf {s}}^{(0)}_{m}$, ${\bf V}$, $E_{s}$, $\Delta m$

  \begin{algorithmic}[1]
    \STATE{{\bf for} $i=1$ to $i_{max}$ }
    \STATE{\quad{\bf for} $m=1$ to $M-l_{max}$}
    \STATE{\quad\quad{\bf for} $n=1$ to $N$}
     \STATE{\quad\quad\quad{\bf if} $m = 1$ or $m = m+\Delta m+1$}
     \STATE{\quad\quad\quad\quad${\bf w}^{(i)}_{n,m}={\bf H}^{H}_{n,m}[:,l']({\bf H}_{n,m}{\bf V}_{n,m}{\bf H}_{n,m}^{H}$\\ \quad\quad\quad\quad\quad\quad\quad+$\sigma_{n}^2{{\bf I}_{l_{\max}+1}})^{-1}$}

     \STATE{\quad\quad\quad{\bf end if}}

     \STATE{\quad\quad\quad${\hat{\bf r}^{(i)}_{n,m}} = \sum_{j=0}^{l'+l_{max}}{\bf H}_{n,m}[:,j]{s}_{n,(m'+j)}+{\bf z}_{n,m}-$\\ \quad\quad\quad\quad\quad\quad$\sum_{j=0}^{l'-1}{\bf H}_{n,m}[:,j]\tilde{s}_{n,(m'+j)}^{(i)}-$}
      \STATE{\quad\quad\quad\quad\quad\quad$\sum_{k=l'+1}^{l'+l_{max}}{\bf H}_{n,m}[:,k]\tilde{s}_{n,(m'+k)}^{(i-1)}$}
      \STATE{\quad\quad\quad${\hat{s}}^{(i)}_{n,m} = {\bf w}^{(i)}_{n,m}{\hat{\bf r}}^{(i)}_{n,m}$}
     \STATE{\quad\quad\quad$\mu^{(i)}_{n,m} = {\bf w}^{(i)}_{n,m}{\bf H}_{n,m}[:,l']$}
     \STATE{\quad\quad\quad$\sigma_{\tilde{z}^{(i)}_{n,m}}^2 = \frac{\mu^{(i)}_{n,m}-|\mu^{(i)}_{n,m}|^2}{|\mu^{(i)}_{n,m}|^2}E_{s}.$}
    \STATE{\quad\quad{\bf end for}}
    \STATE{\quad\quad${\hat{\bf y}}_{m}= {{\bf F}_{N}}{\hat{\bf s}^{(i)}_{m}}$}
    \STATE{\quad\quad${\tilde{{x}}}_{m}[n]= \sum_{a\in \mathcal {Q}}{\text {Pr}}\{\hat{y}_{m}[n]|x_{m}[n]=a\}\times a$}
      \STATE{\quad\quad${{\bf V}}_{m}[n,n]= \sum_{a\in \mathcal {Q}}{\text {Pr}}\{\hat{y}_{m}[n]|x_{m}[n]=a\}$\\ \quad\quad\quad\quad\quad\quad\quad$\times |a-{\tilde{{x}}}_{m}[n]|^2$}
      \STATE{\quad\quad$\tilde{\bf {s}}^{(i)}_{m}= {{\bf F}_{N}^{\dag }}\hat{{\bf x}}_{m}$}\\
      \STATE{\quad\quad${{\bf V}^{(i)}_m}= {{\bf F}_{N}^{\dag}}{\bf V}_{m}{\bf F}_{N}$\\}
    \STATE{\quad{\bf end for}}
    \STATE{{\bf end for}}
  \end{algorithmic}
   {\bf Output:} {\bf {s}} {\bf {x}}.
\end{algorithm}}

\begin{figure}[!t]
\includegraphics[width=3.2in]{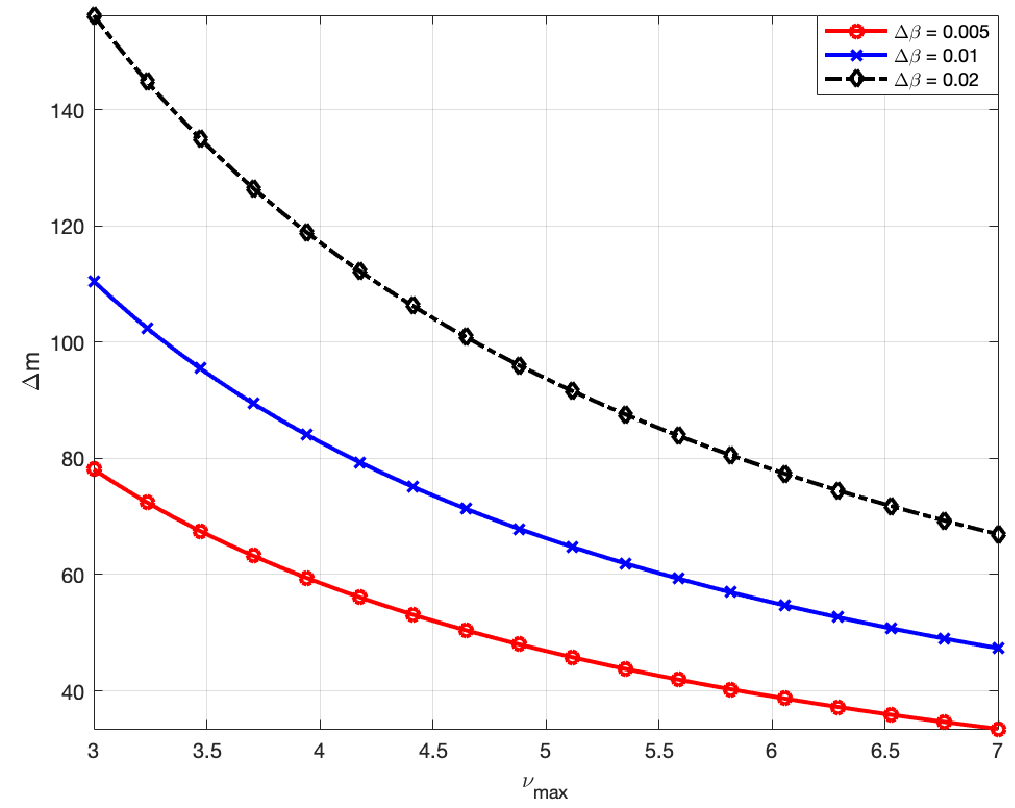}
\caption{Relation between the number of approximated sub-channels $\Delta m$ and the maximum Doppler index $\nu_{max}$, when MSE tolerance $\Delta \beta=0.005, 0.01, 0.02$.}
\label{f:fig2}
\end{figure}

\subsection{SINR Analysis for Approximate SIC-MMSE}
Recall that the output from the exact SIC-MMSE filtering ${\hat{s}}^{(i)}_{n,m}$ can be written in relation to the transmit symbol $s^{(i)}_{n,m}$ in the $i$-th iteration as in (\ref{eq38}) and the variance of ${\tilde {z}^{(i)}_{n,m}}$ can be approximated as in (\ref{eq399}) or it can be computed as
\begin{align}
\sigma^2_{{\tilde{z}_{n,m}^{(i)}}}=&\sum_{j=1}^{l'-1}|{\bf w}^{*}_{n,m}{\bf H}^{H}_{n,m}[:,j]|^2 \sigma^2_{e_{n,(m'+j)}^{(i)}}\nonumber \\&+\sum_{k=l'+1}^{l'+l_{max}} |{\bf w}^{*}_{n,m}{\bf H}_{n,m}[:,k]|^2 \sigma^2_{e_{n,(m'+k)}^{(i-1)}}+||{\bf w}^{*}_{n,m}||^2\sigma_{n}^2,\label{eq46}
\end{align}
where $\sigma^2_{e}$ and $\sigma^2_{n}$ are the error variance and noise variance, respectively. Without MMSE approximation, the SINR from the exact MMSE filtering at the $m$-th sub-channel is calculated as
\begin{align}
\text{SINR}_{n,m} = \frac{|\mu^{(i)}_{n,m}|^2E_s}{\sigma^2_{{\tilde{z}_{n,m}^{(i)}}}},
\end{align}
where $\mu^{(i)}_{n,m}$ is calculated from (\ref{eq399}) with the $m$-th sub-channel's exact MMSE ${\bf w}^{*}_{n,m}$. If ${\bf w}^{*}_{n,m}$ is re-used as an approximation for the $(\Delta m+m)$-th sub-channel, the SINR at the $(\Delta m+m)$-th sub-channel from the approximated MMSE filtering can be calculated as
\begin{align}
\text{SINR}_{n,(\Delta m+m)} = \frac{|\mu^{(i)}_{{n,(\Delta m+m)}}|^2E_s}{\sigma^2_{{\tilde{z}_{n,(\Delta m+m)}^{(i)}}}},
\end{align}
where $\mu^{(i)}_{{n,(\Delta m+m)}}$ and ${\sigma^2_{\tilde{z}_{{n,(\Delta m+m)}}}}$ are filtered by the $m$-th sub-channel MMSE ${\bf w}^{*}_{n,m}$ instead. As such, the non-optimal filter weights ${\bf w}^{*}_{n,m}$ for the $(\Delta m+m)$-th sub-channel causes SINR at the ${(\Delta m+m)}$ sub-channel to drop continuously as $\Delta m$ increases until $\Delta m$ reaches the point where the next exact MMSE needs to be re-calculated. Fig. \ref{f:fig33} shows how SINR changes with respect to $\Delta m$ for the filtered symbols at the $n$-th block under both optimal SIC-MMSE operation and the approximate SIC-MMSE that is proposed in this section. As it can be seen, the SINR for the approximate MMSE drops continuously as the number of approximated $\Delta m$ sub-channel increases and stays below the optimal SINR. It reaches the lowest right before the next exact MMSE will be re-calculated. This also indicates that the MSE at $(m+\Delta m)$-th sub-channel has just reached to/beyond the MSE tolerance threshold $\Delta \beta$. After re-calculating the exact MMSE filter coefficients for the $(m+\Delta m+1)$-th sub-channel, the SINR goes back to the optimal value, which intersects with the exact MMSE SINR curve. The SINR of the approximated SIC-MMSE continues to experience such up-and-down alternations until detection reaches the last $(M-l_{max})$-th symbol in the $n$-th block.

\begin{table*}

\caption{Complexity Comparisons Among Different Detection Algorithms}
\label{my-label}
\begin{adjustbox}{width=1\textwidth}
\begin{tabular}{ | m{4.2em} | m{2.8cm}| m{3.4cm} | m{2.8cm}| m{3.3cm}| m{2.8cm}|}
  \hline
  Detector & Classical MMSE \cite{Li2021CrossModulation}/SIC-MMSE\cite{Chatterjee2021NonorthogonalTransmission} & MP\cite{P.RavitejaKhoaT.PhanQianyuJinYiHong2022} & MRC\cite{Thaj2020LowSystems} & Proposed SIC-MMSE & Proposed Approximate SIC-MMSE\\
  \hline
  Complexity & $\mathcal{O}((M\!-l_{max})^3N^3)$ & $\mathcal{O}((M\!-l_{max})NP\mathcal{|Q|})$ & $\mathcal{O}((M\!-l_{max})NP)$& $\mathcal{O}((M\!-l_{max})Nl_{max}^3)$ & $\mathcal{O}\Bigl(\frac{(M-l_{max})Nl_{max}^3}{\Delta m}\Bigl)$\\
  \hline

\end{tabular}
 \end{adjustbox}
\end{table*}

\begin{figure}[!t]
\includegraphics[width=3.2in]{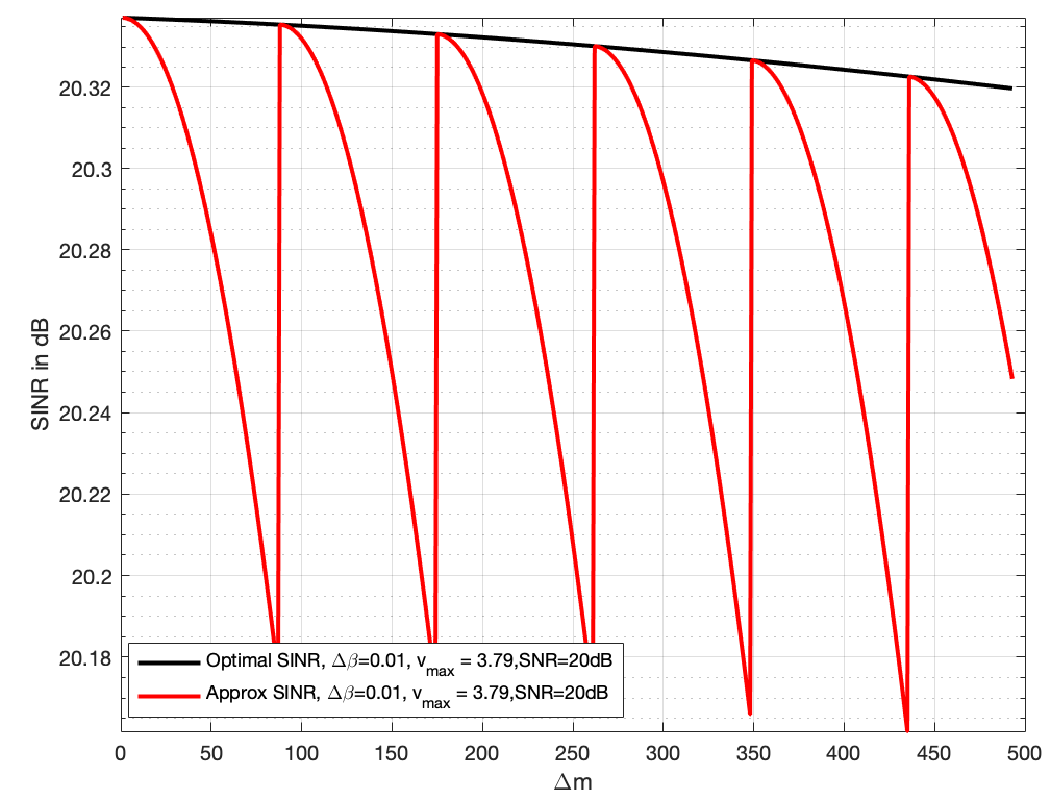}
\caption{SINR for each symbol after SIC-MMSE and the approximate SIC-MMSE filtering for 16QAM OTFS with $M=512, N=128$.}
\label{f:fig33}
\end{figure}

\subsection{Complexity Analysis for Approximate SIC-MMSE Per Iteration}

The computational complexity for interference cancellation in (\ref{eq36}) can be reduced by storing $\Delta {\bf r} = {\bf r}-\sum\limits_{j=1}^{l'+1}{\bf H}_{n,m}[:,j]{\tilde s}_{j}^{(i)}$ as an interference-reduced variable so that ${\bf r}$ does not have to cancel interference repeatedly each time for new target symbols. As a result, this only requires $\mathcal{O}((M-l_{max})Nl_{max})$ complex multiplications. Without using the MMSE approximation, the overall complexity is dominated by the MMSE weights calculation, which requires an overall complexity of $\mathcal{O}((M-l_{max})Nl_{max}^3)$. When approximation is applied, this can be significantly reduced to $\mathcal{O}\Bigl(\frac{(M-l_{max})Nl_{max}^3}{\Delta m}\Bigl)$. By (\ref{eq3122}), the overall complexity is then $\mathcal{O}\Bigl(\frac{(M-l_{max})Nl_{max}^3\pi}{(\Delta m_{c}{\cos^{-1}}(\frac{\Delta \beta}{2a}-\theta))}\Bigl)$. It can be seen that the complexity is also determined by the channel coherence time - a longer coherence time allows more freedom for the number of approximated sub-channels. This can result in only a few sub-channels being required to calculate the exact MMSE weights. In our simulation in Sec. VI, we will show that $(M-l_{max})=493$ sub-channels at one particular block can be approximated by only calculating less than $5$ exact MMSE weights without any noticeable performance degradation. As for the channel parameters, the maximum Doppler index in the channel is $\nu_{max}=3.79$ which corresponds to a Doppler spread $B_{d}=888Hz$ approximately. Under such a time-selective channel, the coherence time calculated from the Doppler spread is around $T_{c}=0.0011s$, and the number of coherence symbols $\Delta m_{c}$ is around $\Delta m_{c}=8645$, which renders the approximated sub-channels to sit well within the coherence time. Apart from MMSE weight calculations, computing the MMSE output ${\hat{s}}^{(i)}_{n,m}$ and variance $\sigma_{\tilde{z}^{(i)}_{n,m}}^2$ for all symbols requires another $\mathcal{O}(2(M-l_{max})Nl_{max})$ complex multiplications. The transformation between the time domain and the DD domain needs two $N$-FFT operations which have a complexity of $\mathcal{O}(2(M-l_{max})Nlog_{2}N)$. Even so, the overall computation complexity is still dominated by $\mathcal{O}\Bigl(\frac{(M-l_{max})Nl_{max}^3}{\Delta m}\Bigl)$. Especially when $\Delta m$ is large, it is comparable to the complexity of MRC that is dominated by an overall complexity of $\mathcal{O}((M-l_{max})NP)$ \cite{LR30}. The complexity comparison between the proposed detectors and some benchmark detectors for ZP-OTFS are summarized in Table 1. \par
{Here, we provide a numerical example of the complexity for the proposed algorithms compared with the other algorithms. For instance, the frame size is set as $M=512, N=128$ and the largest delay spread is $l_{max} = 19$. For classical MMSE and SIC-MMSE, the complexity order is thus $2.8\times 10^{14}$. However, for our proposed soft SIC-MMSE, this complexity order can be reduced to $4.5\times 10^8$, and for the approximate SIC-MMSE, this complexity order can be further brought down to $4.5\times 10^6$. The complexity of MRC is $5.8 \times 10^{5}$, which is now comparable to our proposed approximate SIC-MMSE. However, the proposed SIC-MMSE has almost 2.5 dB gain over MRC for 16QAM modulation at BER $3.5\times 10^{-4}$ as shown in our simulation.}
\subsection{Performance Analysis}
In this subsection, we analyze the error performance of the proposed approximate iterative SIC-MMSE using state evolution (SE). We demonstrate that SE can reliably predict/analyze the mean squared errors during the iterative process. We will derive upper and lower bounds on SE based on the residual interference power. Simulation results show that the MSE of our approximate SIC-MMSE closely matches the theoretical MSE lower bound obtained from SE.\par
We define the error state at the $i$-th iteration after SIC-MMSE linear detection in the time domain as
\begin{align}
\tau^2[i] \delequal \lim_{MN \to \infty} \frac{1}{MN}{\text {Tr}}({\bf C}^{t}), \label{eq4}
\end{align}
where ${\bf C}^{t}$ is the error covariance matrix for all $MN$ time domain symbols after SIC-MMSE detection. Each element in ${\bf C}^{t}$ at the $n$-th block can be calculated using (\ref{eq399}) such that
\begin{align}
C^t[{m},{m}] = \frac{\mu_{{m}}-|\mu_{{m}}|^2}{|\mu_{{m}}|^2}E_{s}, \label{eq7}
\end{align}
We set a lower threshold of $\tau_{low}^2[i]$ by assuming the residual interference in the current iteration has been removed completely with ${\bf V}_{n,m}={\text {diag}}\{0,...,0,\tau_{prior}^2[i],...,\tau_{prior}^2[i]\}$. We also set an upper threshold $\tau_{up}^2[i]$ by assuming the residual interference was not canceled which still have full signal power $E_s$, i.e. ${\bf V}_{n,m}={\text {diag}}\{E_s,...,E_s,\tau_{prior}^2[i+1],...,\tau_{prior}^2[i+1]\}$, where $\tau_{prior}^2[i+1]$ is the time domain \emph{a prior} error state that was obtained based on DD domain detection from the last iteration, which will be introduced in more details later.\par
Due to the unitary transformation between time domain and DD domain via FFT/IFFT, the covariance of the equivalent noise in the DD domain is assumed to be the same as the time domain error covariance. The DD domain observations can be expressed as
\begin{align}
{\bf y} = {\bf x}+{\bf n}, \label{eq8}
\end{align}
where ${\bf y}$ is the DD domain observations transformed from time domain estimates after the SIC-MMSE, i.e., ${\bf y} = {\bf F}_N{\bf \hat {s}}$, and ${\bf x}$ is the transmit symbols in DD domain with a given constellation set $\mathcal {Q}$ and {\bf n} is the equivalent Gaussian noise in the DD domain with the same covariance matrix as ${\bf C}^{t}$. Here we also define the effective SNR in DD domain as
\begin{align}
\text{SNR}_{\text{eff}} = \frac{E_s}{\tau^2[i]}, \label{eq81}
\end{align}
After we perform the non-linear detection in the DD domain, the error covariance will be updated. Based on this, we define the second error state at the $l$-th iteration after non-linear symbol-by-symbol MAP detection in the DD domain as
\begin{align}
\nu^2[i] \delequal \lim_{MN \to \infty} \frac{1}{MN}{\text Tr}({\bf C}^{DD})  = \text{MSE}({\tau^2[i]}),  \label{eq9}
\end{align}
which can be interpreted as the average of the $MN$ \emph{a posteriori} variances. Note that the covariance matrix ${\bf C}^{DD}$ cannot be
directly derived in terms of $\tau^2[i]$ due to the DD domain non-linear detection, we instead consider a Monte Carlo approach
to calculate the MSE. A sufficiently large value of $MN$ observations are manually generated according to (\ref{eq8}), given the variance of ${\bf n}$ as $\tau^2[i]$. Then, the MSE can be obtained by
\begin{align}
\text{MSE} = {\mathbb {E}}\{|x-{\mathbb {E}}\{{x|x+n}\}|^2\}.\label{eq10}
\end{align}
By averaging the errors across the $MN$ samples, we can then obtain the error state $\nu^2[i]$ after DD domain non-linear estimation. Finally, the error state $\nu^2[i]$ will also be used as the initial time domain error state at $(i+1)$-th iteration for calculating the ${\bf C}^{t}$ in (\ref{eq7}),
\begin{align}
\tau_{prior}^2[i+1] = \nu^2[i]. \label{eq11}
\end{align}\par
Fig. \ref{f:fig4} shows the simulated MSE by SE analysis. For the simulation, MSE statistics were collected by averaging the errors prior to time domain linear estimation. {As depicted in the plot, the MSE exhibits a consistent decrease as the number of iterations progresses, typically converging within just four iterations. This trend indicates that the residual interference power diminishes significantly with increasing iterations, thereby leading to an enhancement in overall error performance. Moreover, the analysis underscores a close alignment between the simulation results and the lower bound analysis, assuming complete interference cancellation.} The close match between the simulation and lower threshold SE also suggests that the residual interference power is negligible. Fig. \ref{f:fig5} displays \blue{$\text{SNR}_{\text{eff}}$} prior to the DD domain symbol-by-symbol estimation. The effective SNR increases in the first few iterations and then stabilizes as a result of a more suppressed residual interference plus noise power with more iterations.

\begin{figure}[!t]
\includegraphics[width=3.5in]{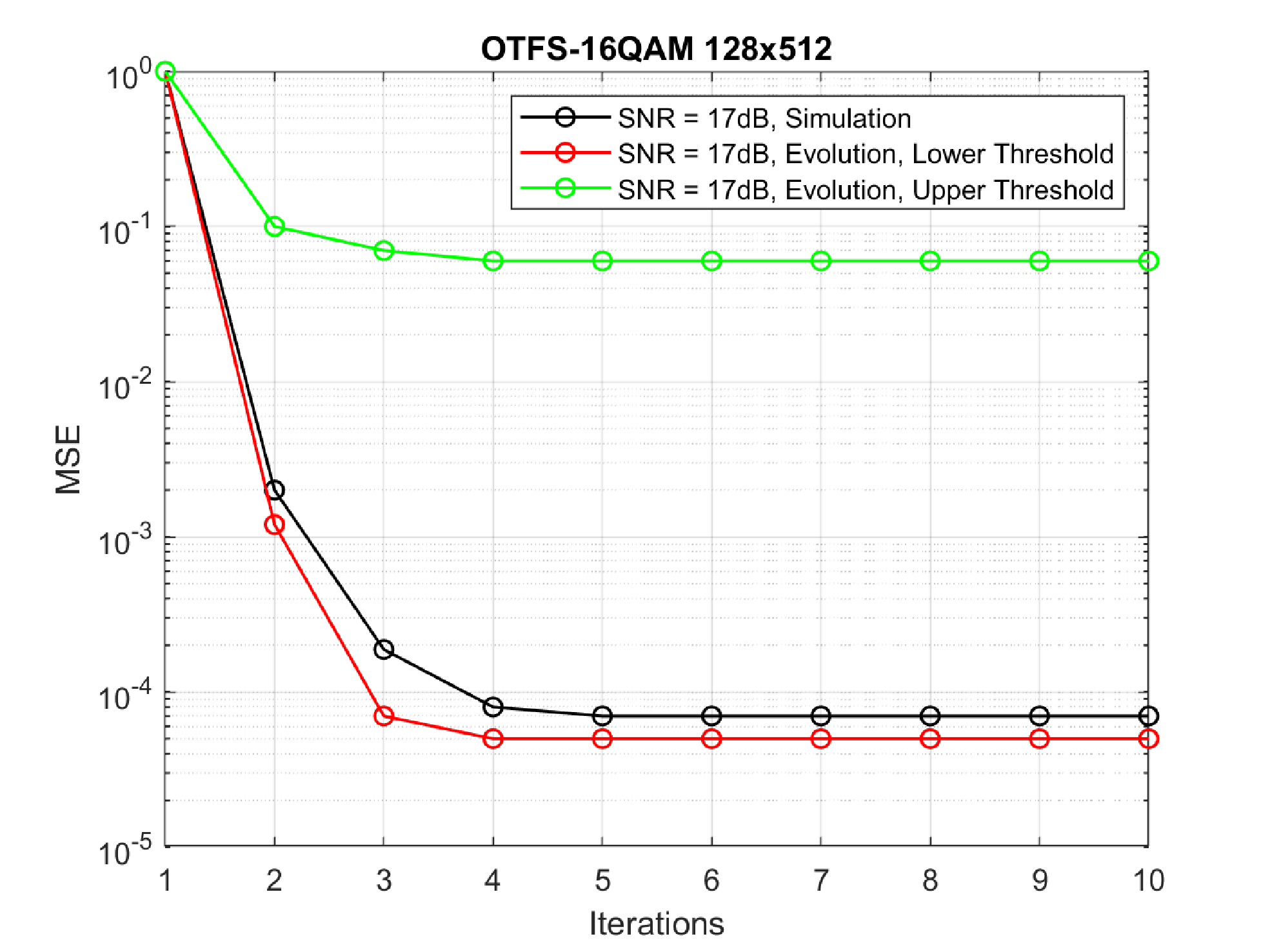}
\caption{MSE performance for both simulations and SE analysis across different iterations for 16QAM OTFS with $M=512, N=128$, SNR$=17$dB and EVA channel.}
\label{f:fig4}
\end{figure}

\begin{figure}[!t]
\includegraphics[width=3.5in]{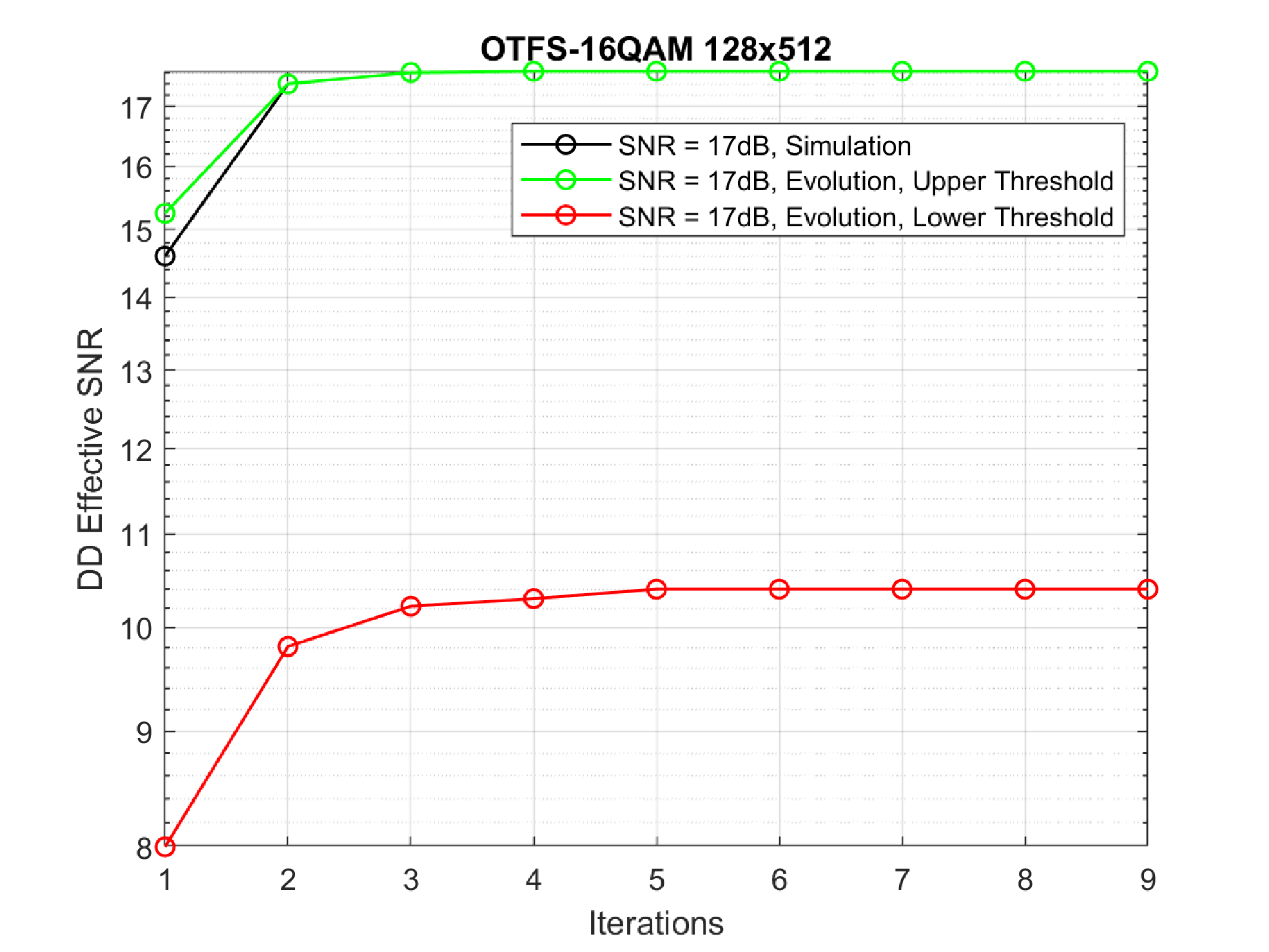}
\caption{Effective SNR in the DD domain for both simulations and SE analysis across different iterations for 16QAM OTFS with $M=512, N=128$, SNR$=17$dB and EVA channel.}
\label{f:fig5}
\end{figure}

\section{SISO Iterative Decoding and Detection}
This section investigates the proposed low-complexity SIC-MMSE based turbo receiver, in which the soft extrinsic information is exchanged iteratively between a decoder and an approximate SIC-MMSE detector.\par

\subsection{Iterative receiver}
Fig. \ref{f:fig3} shows the block diagram for the proposed turbo receiver. To execute the turbo process, the detector receives the time domain input symbols from the channel and performs the low-complexity approximate SIC-MMSE proposed in Section IV. After all the time domain SIC-MMSE outputs have been processed, the \emph{a posteriori} mean and variance for each transmit \blue{symbol of $\mathcal{Q}$} can be calculated by (\ref{eq312}) and (\ref{eq313}). The log-likelihood ratio (LLR) of the $p$-th bit $b_p$ in symbol ${\tilde{{x}}}_{m}[n]$ detected by SIC-MMSE in ${\tilde{\bf x}}_{m}$ can be calculated as
\begin{align}
L_{m,n}[p] =& \log\frac{\text {Pr}(b_{p}=0|\hat{y}_{m}[n])}{\text {Pr}(b_{p}=1|\hat{y}_{m}[n])} \nonumber \\=& \log\frac{\sum_{q\in \mathcal{Q}^{p}_0}\text {exp}(-|\hat{y}_{m}[n]-q|^2/\tilde{{\bf V}}_{m}[n,n])}{\sum_{q'\in \mathcal{Q}^{p}_1}\text {exp}(-|\hat{y}_{m}[n]-q'|^2/\tilde{{\bf V}}_{m}[n,n])},\label{eq51}
\end{align}
where $\mathcal{Q}^{p}_0$ \blue{and $\mathcal{Q}^{p}_1$ are the symbol sets} that has the target $p$-th bit \blue{beginning with 0 and 1, respectively.} The complexity of LLR calculations can be reduced by using approximations in \cite{Studer2011ASICCancellation}. After acquiring the intrinsic LLR for all the bits, we then have the LLR vector ${\bf L}^{\text{out}}_{c}$ \blue{from the detector in serial}. The extrinsic LLR is then computed by
\begin{align}
{\bf L}_{c}^e = {\bf L}^{\text{out}}_{c}-{\bf L}^{a},\label{eq52}
\end{align}
where ${\bf L}^{a}$ is the \emph{a priori} LLR to the detector. \blue{The} extrinsic LLR \blue{is} de-interleaved resulting in \blue{$\pi^{-1}({\bf L}_{c}^e)= {\bf L}_{c}^{a}$, where $\pi(.)$ and $\pi^{-1}(.)$ represent a pair of interleaving and de-interleaving functions}. These LLR will be used as the input LLR to the decoder. The output LLR ${\bf L}^{\text{out}}_{d}$ from the decoder will be re-interleaved becoming \blue{$\pi({\bf L}^{\text{out}}_{d})={\bf L}^{a}$ and passed to the detector.}
\begin{figure}[!t]
\includegraphics[width=\linewidth]{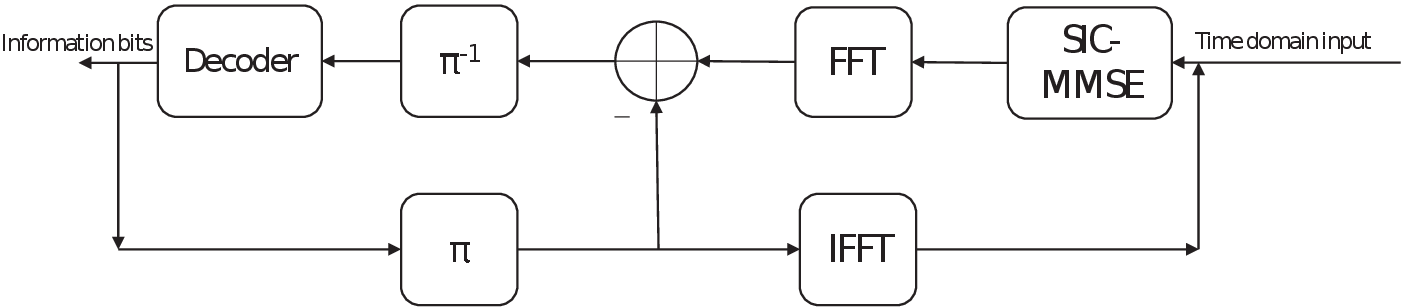}
\caption{Block diagram for the proposed turbo joint detection and decoding.}
\label{f:fig3}
\end{figure}

\blue{As pointed out in \cite{Witzke2002IterativeDetectors}}, \blue{using} the extrinsic information at the output of the decoder performs much worse than intrinsic information, especially for high interference channels \cite{Shen2022ErrorReceivers}. Therefore, we use the intrinsic information from the decoder instead - feeding ${\bf L}^{a}$ directly back into the interleaver as the \emph{a priori} information to the detector. This is possible as we have iterative operations with FFT/IFFT \cite{Li2021CrossModulation}. After all the bits have been interleaved, the output LLR {${{\bf L}}^{\text{a}}$} are converted to soft symbol estimates by
\begin{align}
  {\hat{{x}}}_{m}[n] = \sum_{a\in \mathcal{Q}}a\times \frac{1}{2^{\text{log}_{2}|\mathcal{Q}|}} \prod_{p=1}^{\text{log}_{2}|\mathcal{Q}|}\Bigl(1+\text{sgn}(b_{p}){\text {tanh}}\frac{{L}_{m,n}^{\text{a}}[p]}{2}\Bigl),\label{eq54}
\end{align}
where $\text{sgn}(b_{p})=1$ for $b_p=1$ and $\text{sgn}(b_{p})=-1$ for $b_p=0$. The variance can then be computed by
\begin{align}
\sigma^2_{x_{m}[n]} =& \sum_{a\in \mathcal{Q}}|a|^2\times \frac{1}{2^{\text{log}_{2}|\mathcal{Q}|}} \prod_{p=1}^{\text{log}_{2}|\mathcal{Q}|}\Bigl(1+\text{sgn}(b_{p}){\text {tanh}}\frac{{L}^{\text{a}}_{m,n}[p]}{2}\Bigl)\nonumber\\&-|{\hat{{x}}}_{m}[n]|^2.\label{eq55}
\end{align}
Then, the DD domain soft symbols ${\hat{\bf x}}_{m}$ and their variances will be transformed back into the time domain. They will be used as the \emph{a priori} mean and variance for detection in the next turbo iteration. Intuitively, the newly acquired soft information for time domain symbols generated from the decoder provides more accurate \emph{a priori} information for detection in the following iteration.

\section{Numerical Results}
In this section, we provide simulation results with different system parameters for the proposed detection algorithms. The sub-carrier spacing $\Delta f$ is set as 15 kHz and carrier frequency $4$ GHz. The OTFS frame size has $M=512$ and $N=128$. The wireless channel is generated according to the standard EVA channel model with a UE speed at 120 km/h. The maximum integer delay tap is $l_{max}=19$ and the maximum Doppler tap is $k_{max}=3.79$. The Doppler taps for the $i$-th path are generated from a uniform distribution $U(-k_{max},k_{max})$. Both perfect and imperfect channel state information (CSI) is assumed at the receiver.

\subsection{Perfect Receiver CSI}

\begin{figure}[!t]
\includegraphics[width=3.5in]{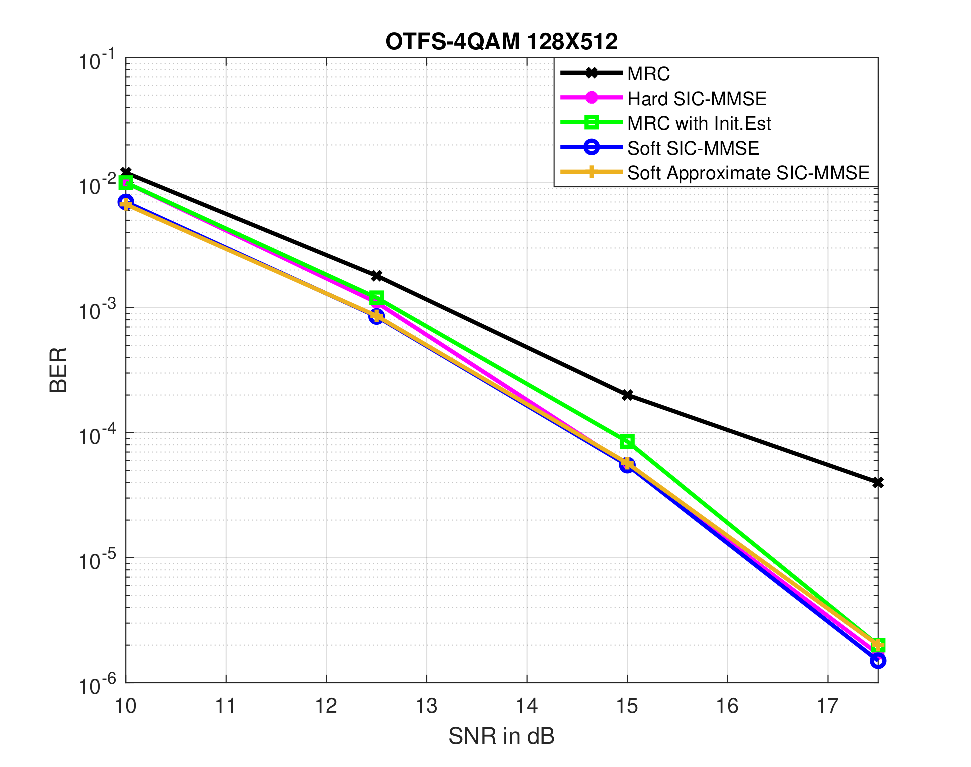}
\caption{BER performance for the proposed SIC-MMSE and approximate SIC-MMSE compared with other detectors for 4QAM OTFS with $M=512, N=128$, EVA channel.}
\label{f:fig6}
\end{figure}

Fig. \ref{f:fig6} \blue{shows the BER} of the proposed SIC-MMSE detection algorithm compared to \blue{several benchmark} OTFS detectors. For iterative detectors, a maximum of 10 iterations is allowed. The detectors included in the comparison are MRC, the hard SIC-MMSE from Section III, the soft SIC-MMSE from Section III, and the soft approximate SIC-MMSE from Section IV. The exact SIC-MMSE detection, which processes all sub-channels with the exact MMSE filtering at high complexity, outperforms MRC without initial estimates by up to 2dB gain, even though our detection does not require time-frequency domain initial estimates. The low-complexity approximate SIC-MMSE does not degrade the error performance compared to the exact SIC-MMSE; however, the complexity is significantly reduced as analyzed in Section IV. The parameters we use for the approximate SIC-MMSE is $\Delta \beta = 0.01$ corresponding to $\Delta m=100$, in which the approximated sub-channels sit well within the coherence time of the physical channel. Overall, compared to other detectors in Fig. \ref{f:fig6}, our approximate SIC-MMSE not only shows better error performance but also has low computational complexity.\par

\begin{figure}[!t]
\includegraphics[width=3.5in]{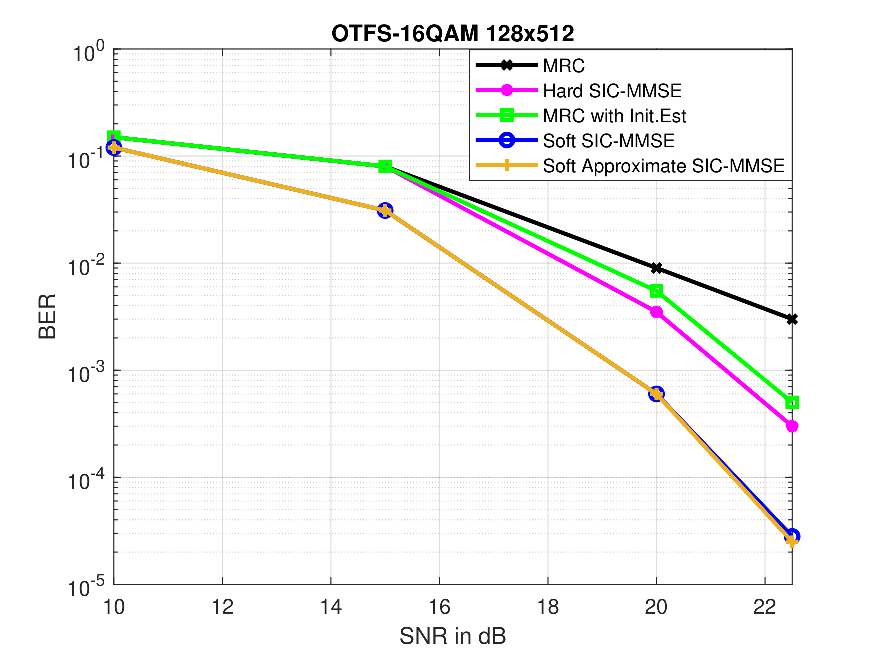}
\caption{BER performance for the proposed SIC-MMSE and approximate SIC-MMSE compared with other detectors for 16QAM OTFS with $M=512,N=128$, EVA channel.}
\label{f:fig7}
\end{figure}

Fig. \ref{f:fig7} displays the BER for 16QAM. Similar to the results obtained for 4QAM, both our proposed SIC-MMSE and approximate SIC-MMSE outperform MRC to an even greater extent. The approximate SIC-MMSE still maintains the same error performance as the non-approximate SIC-MMSE introduced in Section III.

\begin{figure}[!t]
\includegraphics[width=3.7in]{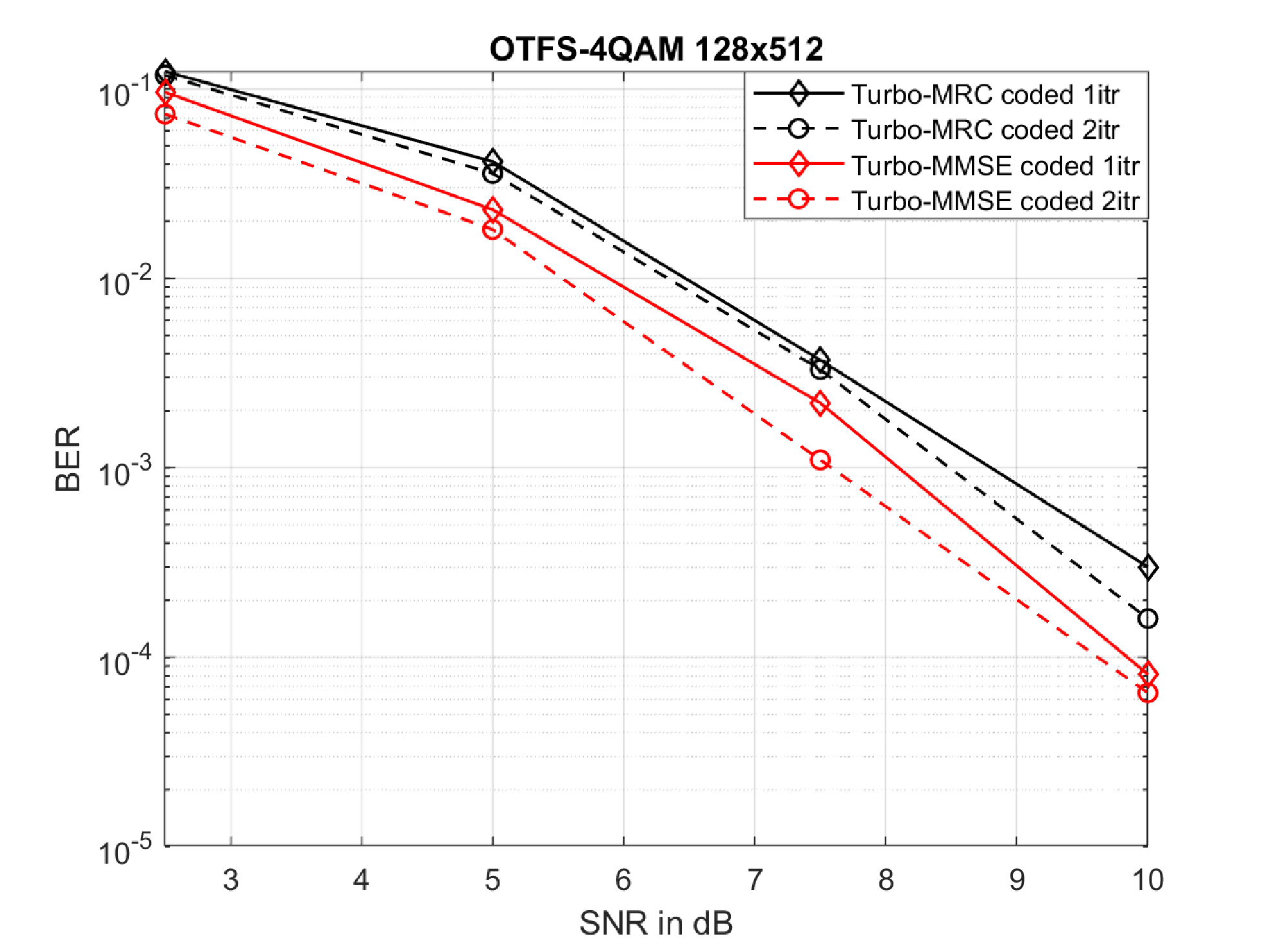}
\caption{BER performance for the proposed turbo joint detection and decoding receiver for 4QAM OTFS with $M=512,N=128$, EVA channel.}
\label{f:fig8}
\end{figure}

Fig. \ref{f:fig8} shows the BER for our proposed turbo iterative detection and decoding algorithm. For the coding scheme, we use a low-density parity-check (LDPC) code. We employ a half-rate LDPC code of length $N_{c}=3840$ bits from \cite{Nguyen2019EfficientRadio}, and every OTFS frame contains {$\lfloor NMlog_{2}(|\mathcal{Q}|)/N_{c}\rfloor$} codewords. The detector performs the approximate SIC-MMSE with the same parameters used in the previous simulations and then exchanges the extrinsic information with the LDPC decoder as introduced in Section V. The BER plot illustrates that one iteration of our proposed turbo approximate SIC-MMSE can outperform two-iteration turbo MRC up to a 1dB gain.

\begin{figure}[!t]
\includegraphics[width=3.7in]{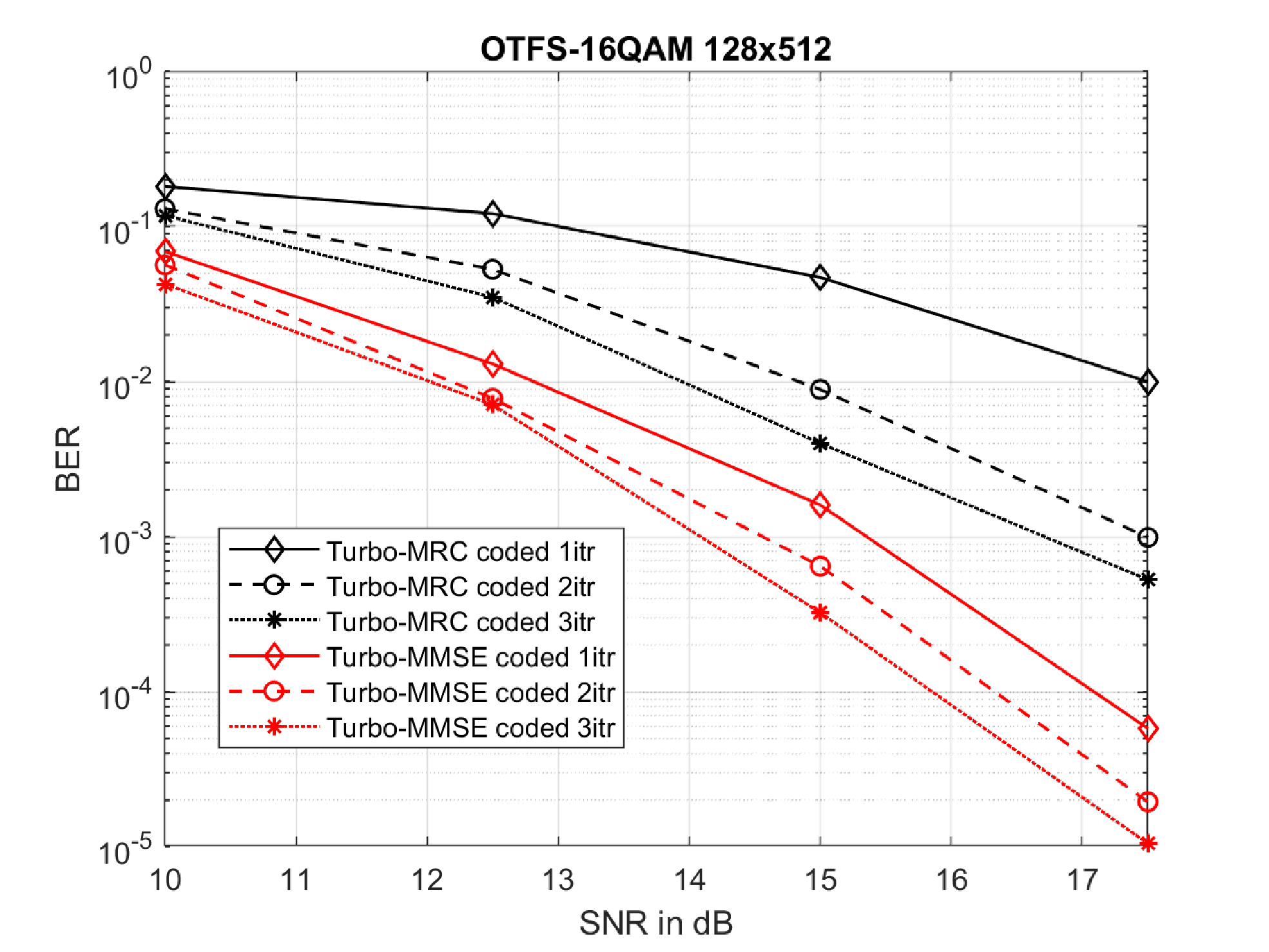}
\caption{BER performance for the proposed turbo joint detection and decoding receiver for 16QAM OTFS with $M=512,N=128$, EVA channel.}
\label{f:fig10}
\end{figure}

Fig. \ref{f:fig10} shows the BER for the 16QAM OTFS. We can see that compared to 4QAM, the detection plus decoding gain of the proposed turbo approximate SIC-MMSE has risen significantly. One-iteration turbo approximate SIC-MMSE can outperform three-iteration turbo MRC with up to 2dB in the BER plot. When three iterations are equally generated for turbo approximate SIC-MMSE, up to 3dB gain can be achieved compared to MRC. In conclusion, our proposed joint iterative detection and decoding algorithm achieved significant performance gain compared to its MRC counterpart, especially for high-order modulation.
\subsection{Imperfect Receiver CSI}
{For imperfect CSI, we consider that the estimated Doppler taps $\boldsymbol{\tilde\kappa}$ have a Gaussian distributed error, i.e. $\boldsymbol{\tilde\kappa} = \boldsymbol{\kappa} + {\Delta{\bf \boldsymbol{\kappa}}}$, where $\boldsymbol{\tilde\kappa}$ is the estimated Doppler taps, and ${\Delta{\bf \boldsymbol{\kappa}}}$ is the estimation error, following the i.i.d Gaussian distribution $\mathcal{CN}(0,\sigma_{k}^2)$. Also, the estimation of the channel coefficients in Fig. \ref{f:fig22} are considered as, i.e. the channel estimation only estimates the first column ${\bf H}_{n,m}[:,1]$ in the time domain which are assumed to have Gaussian distributed errors, i.e. ${\bf \tilde{H}}_{n,m}[:,1] = {\bf {H}}_{n,m}[:,1]+\Delta{{\bf {H}}_{n,m}[:,1]}$ where $\Delta{{\bf {H}}_{n,m}[:,1]}$ is the estimated channel errors following the i.i.d Gaussian distribution $\mathcal{CN}(0,\sigma_{h}^2)$, and the rest of the columns can be generated from (\ref{eq235}), according to the estimated Doppler taps $\boldsymbol{\tilde\kappa}$. }
\begin{figure}[!t]
\includegraphics[width=3.7in]{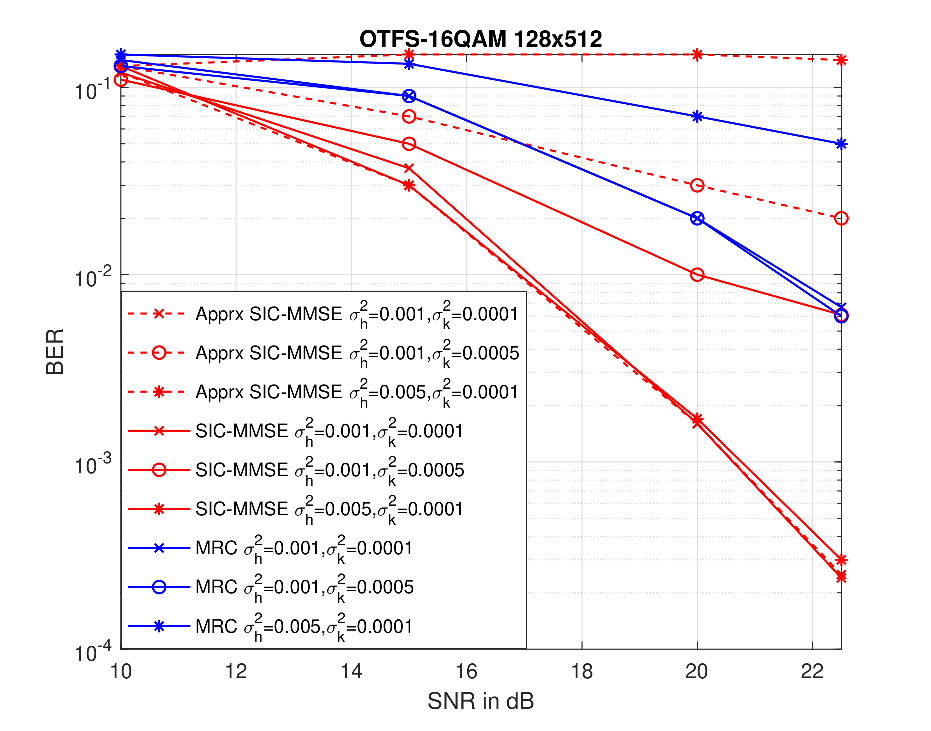}
\caption{BER performance for the proposed SIC-MMSE and approximate SIC-MMSE under different imperfect EVA channels.}
\label{f:fig72}
\end{figure}

{Under imperfect CSI, the quality of the channel estimation has a direct impact on the system's performance. Specifically, as shown in Fig. \ref{f:fig72}, we note that both $\sigma_{h}^2$ and $\sigma_{k}^2$ of the estimation error can impact our proposed approximate MMSE and exact MMSE's BER performance. When $\sigma_{h}^2$ and $\sigma_{k}^2$ are small, i.e. $\sigma_{h}^2=0.001,\sigma_{k}^2=0.0001$, the approximate and exact MMSE share nearly identical BER performance. However, when either $\sigma_{h}^2$ or $\sigma_{k}^2$ is increased, both the approximate and exact MMSE showcased worse BER performance and the approximate MMSE is impacted by the imperfect channel condition more severely. In addition, it is noteworthy that our proposed exact MMSE consistently demonstrates superior performance compared to MRC under the same imperfect channel condition, whereas the approximate MMSE can outperform MRC under some imperfect channel conditions. } \par

\section{Conclusion}
In this paper, we proposed an approximate SIC-MMSE detection algorithm for ZP-OTFS modulation system. Firstly, we derived the hard decision SIC-MMSE and demonstrated that its performance can be improved by soft estimates. Building on the soft estimates, we then proposed an approximate SIC-MMSE algorithm that can significantly reduce the computational complexity while maintaining the same error performance. The maximum number of approximated sub-channels $\Delta m$, which determines the overall computation complexity, can be determined by the pre-set MSE threshold $\Delta \beta$. We analyzed the SINR of each output signal from the approximated MMSE filter. Furthermore, we incorporated the turbo principle into our proposed detector with an LDPC decoder. Note that the OTFS signals considered in this paper do not have received filters, while in practical systems, received filters need to be considered \cite{Shen2022ErrorReceivers}. We would like to point out that the proposed detection method in this paper can also be applied to ODDM signals. Future works will consider these topics.

\appendix
\subsection{Proof of Proposition 1}
The MMSE output ${\tilde{s}_{n,m+\Delta m}}$ filtered by the approximated filtering coefficients ${\bf w}^{*}_{n,m}$ can be written as a linear function with respect to input $s_{n,m+\Delta m}$
\begin{align}
\tilde{s}_{n,m+\Delta m} =& {\bf w}^{*}_{n,m}{\bf H}_{n,m+\Delta m}[:,l']s_{n,m+\Delta m}\nonumber\\&+\sum_{j=1}^{l'-1} {\bf w}^{*}_{n,m}{\bf H}^{H}_{n,m+\Delta m}[:,j]e_{n,(m'+j)}^{(i)}\nonumber\\&+\sum_{k=l'+1}^{l'+l_{max}} {\bf w}^{*}_{n,m}{\bf H}_{n,m+\Delta m}[:,k]e_{n,(m'+k)}^{(i-1)}\nonumber\\&+{\bf w}^{*}_{n,m}{\bf z}^{(i)}_{n,m+\Delta m},\label{eqapp1}
\end{align}
where ${\bf w}^{*}_{n,m}$ is the exact MMSE filter weights calculated from the $m$-th sub-channel and used as an approximation for the $({\Delta m}+m)$-th sub-channel, $m'=\max\{m+\Delta m-l_{max}-1,0\}$ and $l'=\min\{m+\Delta m,l_{max}\}+1$, $\bf z$ is noise. The mean squared error for the non-optimal output $\tilde{s}_{n,m+\Delta m}$ can then be calculated as
\begin{align}
\beta =& \mathbb {E}(|\tilde{s}_{n,m+\Delta m}-s_{n,m+\Delta m}|^2).\label{eqapp2}
\end{align}
Substituting (\ref{eqapp1}) into (\ref{eqapp2}), we then have
\begin{align}
\beta = &\mathbb {E}(|{\bf w}^{*}_{n,m}{\bf H}_{n,m+\Delta m}[:,l']s_{n,m+\Delta m}\nonumber\\&+\sum_{j=1}^{l'-1} {\bf w}^{*}_{n,m}{\bf H}^{H}_{n,m+\Delta m}[:,j]e_{n,(m'+j)}^{(i)}\nonumber\\&+\sum_{k=l'+1}^{l'+l_{max}} {\bf w}^{*}_{n,m}{\bf H}_{n,m+\Delta m}[:,k]e_{n,(m'+k)}^{(i-1)}\nonumber\\&+{\bf w}^{*}_{n,m}{\bf z}^{(i)}_{n,m+\Delta m}-s_{n,m+\Delta m}|^2).\label{eqapp3}
\end{align}
Assuming that time domain random variable $s_{n,m}$ is Gaussian distributed and independent with $e_{n,m}$, (\ref{eqapp3}) can be simplified to
\begin{align}
\beta = &\mathbb {E}(|{\bf w}^{*}_{n,m}{\bf H}_{n,m+\Delta m}[:,l']|^2|{s_{n,m+\Delta m}|^2}\nonumber\\
&-{\bf w}^{*}_{n,m}{\bf H}_{n,m+\Delta m}[:,l']|{s_{n,m+\Delta m}|^2} \nonumber \\
&-\bar{{\bf w}}^{*}_{n,m}\bar{{\bf H}}_{n,m+\Delta m}[:,l']|{s_{n,m+\Delta m}|^2}\nonumber\\
&+\sum_{j=1}^{l'-1} |{\bf w}^{*}_{n,m}{\bf H}^{H}_{n,m+\Delta m}[:,j]|^2|e_{n,(m'+j)}^{(i)}|^2\nonumber \\
&+\sum_{k=l'+1}^{l'+l_{max}}|{\bf w}^{*}_{n,m}{\bf H}_{n,m+\Delta m}[:,k]|^2|e_{n,(m'+k)}^{(i-1)}|^2\nonumber\\
&+||{\bf w}^{*}_{n,m}||^2{\bf z}^{(i)}_{n,m+\Delta m}),\label{eqapp44}
\end{align}
where $\bar{{\bf w}}^{*}_{n,m}\bar{{\bf H}}_{n,m+\Delta m}[:,l']$ is the conjugate of ${\bf w}^{*}_{n,m}{\bf H}_{n,m+\Delta m}[:,l']$. (\ref{eqapp44}) can then be further evaluated
\begin{align}
\beta =& -{\bf w}^{*}_{n,m}{\bf H}_{n,m+\Delta m}[:,l']{E_s}-\bar{{\bf w}}^{*}_{n,m}\bar{{\bf H}}_{n,m+\Delta m}[:,l']{E_s} \nonumber\\
&+|{\bf w}^{*}_{n,m}{\bf H}_{n,m+\Delta m}[:,l']|^2{E_s}\nonumber\\&+\sum_{j=1}^{l'-1} |{\bf w}^{*}_{n,m}{\bf H}^{H}_{n,m+\Delta m}[:,j]|^2{\sigma^2_{e_{n,(m'+j)}^{(i)}}}\nonumber\\&+\sum_{k=l'+1}^{l'+l_{max}}|{\bf w}^{*}_{n,m}{\bf H}_{n,m+\Delta m}[:,k]|^2\sigma^2_{e_{n,(m'+k)}^{(i-1)}}+||{\bf w}^{*}_{n,m}||^2\sigma_{n}^2.\label{eqapp4}
\end{align}
{Assuming that the Doppler indices for different paths are approximated by the maximum Doppler index $\nu_{max}$. Thus, for each $(m+\Delta m)$-th sub-channel, the minimized MSE $\beta_{min}=|{\bf w}^{*}_{n,m}{\bf H}_{n,m+\Delta m}[:,l']|^2{E_s}+\sum_{j=1}^{l'-1} |{\bf w}^{*}_{n,m}{\bf H}^{H}_{n,m+\Delta m}[:,j]|^2{\sigma^2_{e_{n,(m'+j)}^{(i)}}}+\sum_{k=l'+1}^{l'+l_{max}}|{\bf w}^{*}_{n,m}{\bf H}_{n,m+\Delta m}[:,k]|^2\sigma^2_{e_{n,(m'+k)}^{(i-1)}}+||{\bf w}^{*}_{n,m}||^2\\\sigma_{n}^2$ remains unchanged, and $\Delta \beta$ is then}
\begin{align}
\Delta\beta &= \nonumber\beta-\beta_{min}\\&=\nonumber-{\bf w}^{*}_{n,m}{\bf H}_{n,m+\Delta m}[:,l']{E_s}-\bar{{\bf w}}^{*}_{n,m}\bar{{\bf H}}_{n,m+\Delta m}[:,l']{E_s}\\&=(-\mu_{n,m+\Delta m}-\bar{\mu}_{n,m+\Delta m})E_{s}.\label{eqapp5}
\end{align}
Let $\mu_{n,m} = {\bf w}^{*}_{n,m}{\bf H}_{n,m}[:,l'] = a{e^{j\theta}}$, which is the scalar coefficient at the $m$-th sub-channel. And due to maximum Doppler assumption, the $l'$th column at $(m+\Delta m)$-th sub-channel ${\bf H}_{n,m+\Delta m}[:,l']$ is phase rotated by
${e^{j\frac{2\pi}{MN}\Delta m\nu_{max}}}$. Thus, $\mu_{n,m+\Delta m} =\mu_{n,m}{e^{j\frac{2\pi}{MN}\Delta m\nu_{max}}}= a{e^{j\theta}}{e^{j\frac{2\pi}{MN}\Delta m\nu_{max}}}$, which yields
\begin{align}
\Delta\beta = 2acos(\theta+\frac{2\pi}{MN}\Delta m\nu_{max}).\label{eqapp6}
\end{align}
By re-arranging (\ref{eqapp6}), we will arrive at the relation in (\ref{44}).
\bibliographystyle{IEEEtran}
\bibliography{ref2,ref,references}

\end{document}